\theoremstyle{thmstyleone}%
\newtheorem{theorem}{Theorem}[section]
\newtheorem{lemma}[theorem]{Lemma}%
\newtheorem{corollary}[theorem]{Corollary}%
\theoremstyle{thmstyletwo}%
\newtheorem{remark}[theorem]{Remark}%
\theoremstyle{thmstylethree}%
\numberwithin{equation}{section}
\def\N{{\mathbb N}}
\def\Z{{\mathbb Z}}
\def\R{{\mathbb R}}
\def\supp{\mathop{\rm supp} \nolimits}
\begin{document}

\title[Article Title]{Magnetic Pseudo-differential Operators with Hörmander Symbols Dominated by Tempered Weights}


\author*[1]{\fnm{Mikkel Hviid} \sur{Thorn}}\email{mikkelht@math.aau.dk}

\affil[1]{\orgdiv{Department of Mathematical Sciences}, \orgname{Aalborg University}, \orgaddress{\street{Thomas Manns Vej 23}, \city{Aalborg}, \postcode{9220}, \country{Danmark}}}


\abstract{We extend the matrix representation of magnetic pseudo-differential operators in a tight Gabor frame from \cite{CorneanHelfferPurice2018,CorneanHelfferPurice2024} to asymmetrical quantizations and smooth symbols dominated by a tempered weight (and not just decay/growth properties in the momentum variables). This leads to new results regarding the symbol calculus of such operators and their Schatten-class properties.}

\keywords{Magnetic pseudodifferential operators, Matrix representation, Tight Gabor frame, Tempered weights}



\maketitle

\section{Introduction}

The gauge-invariant magnetic pseudo-differential calculus was first proposed in \cite{Mantoiu2004} and has since been developed into a rich theory \cite{Iftimie2007,Iftimie2009,Iftimie2010,Iftimie2010a,Athmouni2018}. In the papers \cite{CorneanHelfferPurice2018,CorneanHelfferPurice2024}, the magnetic Weyl quantization was studied for classical Hörmander symbols with uniform decay properties in the momentum variables for all derivatives by expanding the magnetic pseudo-differential operators in a tight Gabor frame. The resulting matrix representation led to simple and clear proofs of major results in this magnetic pseudo-differential calculus, see e.g. the magnetic Calderón-Vaillancourt theorem \cite[Theorem 3.7]{CorneanHelfferPurice2024} (originally \cite[Theorem 3.1]{Iftimie2007}) and Beals' criterion \cite[Theorem 3.8]{CorneanHelfferPurice2024} (originally \cite[Theorem 1.1]{Iftimie2010}). This work was inspired by \cite{Feichtinger1997,Grochenig2006}. Alternative matrix representations have also been considered in recent years \cite{CorneanGarde2019,Bachmann2025}, each with their own strengths and weaknesses. These matrix representations can help solve problems such as the question of spectral regularity when the magnetic field varies in strength \cite{Cornean2010,CorneanPurice2015} as in \cite{CorneanGarde2019}.

In this paper we study classical Hörmander symbols with uniform decay/growth properties for all derivatives where the decay/growth is dominated by a tempered weight (temperate weight \cite{Nicola2010}, order function \cite{Zworski2012}). For these symbol classes we derive a characterization of the matrices for the corresponding magnetic pseudo-differential operators (Theorem \ref{thm:matrix_rep}). These matrices have a well-defined matrix product with each other, and consequently, we obtain a symbol calculus (Corollary \ref{symbol_cal}). Also, we give criteria on the tempered weight for the corresponding magnetic pseudo-differential operators to be bounded, compact, and in the Schatten-classes 
(Theorems \ref{thm:bounded}, \ref{thm:compact}, and \ref{thm:schatten}).

The contents are organized as follows: After this introduction, we introduce the main definitions concisely. Then in Section \ref{sec:matrix_rep} we construct a tight Gabor frame and characterize our matrix representation, which is then followed by some consequences to e.g. the algebraic properties of symbols in Section \ref{sec:consequences}. Section \ref{sec:schatten} contains results on boundedness, compactness, and Schatten-class properties. Lastly, we discuss possible consequences and limitations of our approach including applications for the magnetic pseudo-differential super operators studied in \cite{LeeLein2022,LeeLein2025}.

\subsection{Symbol Classes and Magnetic Pseudo-differential Operators}

We fix $d\in\N$ and a regular magnetic field $B$ \cite{Mantoiu2004,CorneanHelfferPurice2024}  throughout. Such a regular magnetic field $B$ can be represented by smooth closed 2-forms on $\R^d$ with components in $BC^\infty(\R^d)$. For such a 2-form one can construct a vector potential $A$ satisfying $dA=B$ and having polynomial growth: 
$$A_k(x)=\sum_{j=1}^d\int_0^1ds\,sx_jB_{jk}(sx)$$
This is also called the magnetic potential in the transversal gauge. For this vector potential we define $\varphi(x,y):=\int_{[y,x]}A$ where $[y,x]$ is the line segment between $y$ and $x$.

\noindent\textbf{Tempered weight:} \cite{Nicola2010} We call $M\colon\R^{2d}\rightarrow(0,\infty)$ a tempered weight if there exists $a,C>0$ such that\footnote{Recall the Japanese bracket $\langle x\rangle:=\sqrt{1+\Vert x\Vert^2}$.}
\begin{equation}\label{eq:peetre}
    M(x+y,\xi+\zeta)\leq CM(x,\xi)\langle(y,\zeta)\rangle^a\quad\forall x,y,\xi,\zeta\in\R^d.
\end{equation}
It follows that
$$C^{-1}M(0,0)\langle(x,\xi)\rangle^{-a}\leq M(x,\xi)\leq CM(0,0)\langle(x,\xi)\rangle^a\quad\forall x,\xi\in\R^d$$
and for $p\in\R$ (generalized Peetre's inequality\footnote{Peetre's inequality refers to $\langle x\rangle^s\langle y\rangle^{-s}\leq C\langle x-y\rangle^{|s|}$ for all $s\in\R$ and some $C>0$ depending on $s$ \cite[Equation (0.1.2)]{Nicola2010}.})
\begin{equation}\label{eq:gen_peetre}
    M(x,\xi)^pM(y,\zeta)^{-p}\leq C^{|p|}\langle(x-y,\xi-\zeta)\rangle^{a|p|}\quad\forall x,y,\xi,\zeta\in\R^d.
\end{equation}
Note nothing prohibits weights with decay, e.g. $(x,\xi)\mapsto\langle\xi\rangle^{-1}$ is a tempered weight. Tempered weights are closed under pointwise multiplication and taking real exponents.

\noindent\textbf{Symbol classes:} For such a tempered weight we define the symbol class
$$S_0(M):=\left\{\Phi\in C^\infty(\R^{2d})|\sup_{x,\xi\in\R^d}M(x,\xi)^{-1}|\partial^\gamma\Phi(x,\xi)|<\infty,\forall\gamma\in\N_0^{2d}\right\}.$$
This is a Fréchet space with semi-norms given by:
$$\Vert\Phi\Vert_{S_0(M),n}=\sum_{\gamma\in\N_0^{2d},|\gamma|\leq n}\sup_{x,\xi\in\R^d}M(x,\xi)^{-1}|\partial^\gamma\Phi(x,\xi)|$$
for $n\in\N_0$. 

For any two tempered weights $M_1,M_2$, if there exists $C>0$ such that $M_1\leq CM_2$ pointwise, then $S_0(M_1)$ is continuously embedded into $S_0(M_2)$.

\noindent\textbf{Magnetic pseudo-differential operator:} \cite{Mantoiu2004,Treves2006} For $t\in[0,1]$ we define the following transformation:
$$\mathcal{W}_t^A\Phi(x,y)=(2\pi)^{-d}\int d\xi e^{i\xi\cdot(x-y)}e^{i\varphi(x,y)}\Phi(tx+(1-t)y,\xi),$$
firstly for Schwartz functions $\Phi\in\mathscr{S}(\R^{2d})$ and then extended to $\mathscr{S}'(\R^{2d})$. The operator $\mathcal{W}_t^A$ is a linear homeomorphism on both spaces with the canonical topologies (this means uniform convergence on bounded sets for $\mathscr{S}'(\R^{2d})$).

Now for every tempered distribution $\Phi$ we define the $t$-quantization $\mathfrak{Op}_t^A(\Phi)$ as the unique operator with distributional kernel $\mathcal{W}_t^A\Phi$. For $\Phi\in\bigcup_M S_0(M)$, the operator $\mathfrak{Op}_t^A(\Phi)$ satisfies
\begin{equation}\label{eq:defn_magdiff}
    \langle\mathfrak{Op}_t^A(\Phi)f,g\rangle_{\mathscr{S}',\mathscr{S}}=(2\pi)^{-d}\int d\xi\int dxdye^{i\xi\cdot(x-y)}e^{i\varphi(x,y)}\Phi(tx+(1-t)y,\xi)f(y)g(x)
\end{equation}
for $f,g\in\mathscr{S}(\R^d)$.

Note $t=1/2$ is the Weyl quantization and $t=1$ is the Kohn-Nirenberg or "standard" quantization.

\section{Tight Gabor Frame and Matrix Representation}\label{sec:matrix_rep}

\subsection{Tight Gabor Frame}\label{sec:tight_gabor_frame}

We define the family of functions $(\mathcal{G}_{\tilde{\alpha}}^A)_{\tilde{\alpha}\in\Z^{2d}}$ with $\tilde{\alpha}=(\alpha,\alpha')\in\Z^d\times\Z^d$ in the following way \cite{CorneanHelfferPurice2018,CorneanHelfferPurice2024}: Let $\mathfrak{g}\colon\R^d\rightarrow\R$ be an element of $C_0^\infty(\R^d)$ such that 
$$\supp(\mathfrak{g})\subseteq(-1,1)^d,\quad\sum_{\alpha\in\Z^d}(\tau_\alpha\mathfrak{g})^2\equiv1,$$
where $\tau_yf(x)=f(x-y)$. Then for $\tilde{\alpha}\in\Z^{2d}$ we define
$$\mathcal{G}_{\tilde{\alpha}}^A\colon\R^d\ni x\mapsto (2\pi)^{-\frac{d}{2}}e^{i\varphi(x,\alpha)}\mathfrak{g}(x-\alpha)e^{i\alpha'\cdot (x-\alpha)}.$$

\begin{lemma}\label{convergence_l2}\cite[Proposition 2.2]{CorneanHelfferPurice2024}
    The family $(\mathcal{G}_{\tilde{\alpha}}^A)_{\tilde{\alpha}\in\Z^{2d}}$ defines a Parseval frame\footnote{See \cite[Definition 5.1.2]{Christensen2016}.} in $L^2(\R^d)$ and consequently\footnote{We define $\langle\cdot,\cdot\rangle_{L^2}$ to be antilinear in the first entry and linear in the second.}
    $$f=\sum_{\tilde{\alpha}\in\Z^{2d}}\langle\mathcal{G}_{\tilde{\alpha}}^A,f\rangle_{L^2}\mathcal{G}_{\tilde{\alpha}}^A$$
    holds for every $f\in L^2(\R^d)$ with unconditional convergence.
\end{lemma}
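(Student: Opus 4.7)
The plan is to show the Parseval frame identity
\[
\sum_{\tilde{\alpha}\in\Z^{2d}}|\langle\mathcal{G}_{\tilde{\alpha}}^A,f\rangle_{L^2}|^2=\|f\|_{L^2}^2
\]
for all $f\in L^2(\R^d)$, after which the reconstruction formula with unconditional convergence follows from the general theory of Parseval frames (as in the cited Christensen reference).

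First I would compute the inner product explicitly. Using the definition,
\[
\langle\mathcal{G}_{\tilde{\alpha}}^A,f\rangle_{L^2}=(2\pi)^{-d/2}e^{i\alpha'\cdot\alpha}\int_{\R^d}e^{-i\alpha'\cdot x}\,h_\alpha(x)\,dx,
\]
where $h_\alpha(x):=e^{-i\varphi(x,\alpha)}\mathfrak{g}(x-\alpha)f(x)$. The crucial point is that $h_\alpha$ depends on $\alpha$ only (not on $\alpha'$) and, because $\supp\mathfrak{g}\subseteq(-1,1)^d$, is supported in $\alpha+(-1,1)^d$. Thus the integral above is (up to constants and a unimodular factor) the Fourier series coefficient at frequency $\alpha'$ of $h_\alpha$ viewed as a $2\pi\Z^d$-periodic function.

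Next I would fix $\alpha$ and sum over $\alpha'\in\Z^d$. Because $(-1,1)^d\subsetneq(-\pi,\pi)^d$, the $2\pi\Z^d$-periodization of $h_\alpha$ is a locally finite sum of non-overlapping translates, and its $L^2$-norm on a fundamental domain $[0,2\pi]^d$ equals $\|h_\alpha\|_{L^2(\R^d)}$. Applying Plancherel for Fourier series on the torus, and using that $e^{-i\varphi(x,\alpha)}$ is unimodular so $|h_\alpha(x)|^2=\mathfrak{g}(x-\alpha)^2|f(x)|^2$, yields
\[
\sum_{\alpha'\in\Z^d}|\langle\mathcal{G}_{\tilde{\alpha}}^A,f\rangle_{L^2}|^2=\int_{\R^d}(\tau_\alpha\mathfrak{g})^2(x)\,|f(x)|^2\,dx.
\]

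Finally I would sum over $\alpha\in\Z^d$ and invoke the partition-of-unity property $\sum_{\alpha}(\tau_\alpha\mathfrak{g})^2\equiv1$ to obtain $\|f\|_{L^2}^2$ by monotone convergence. The reconstruction formula is then immediate from the characterization of Parseval frames. There is no real obstacle here; the only subtlety is ensuring the support condition $\supp\mathfrak{g}\subseteq(-1,1)^d$ is what lets one convert the integer-frequency integrals into honest Fourier series without periodization overlap, and noting that the magnetic phase $e^{i\varphi(x,\alpha)}$ drops out entirely once moduli are taken, so the argument reduces to the standard non-magnetic Gabor frame computation.
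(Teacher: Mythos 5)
Your proposal is correct and follows essentially the same route as the paper: both arguments reduce the claim to Plancherel for Fourier series on each cube $\alpha+(-1,1)^d\subset\alpha+(-\pi,\pi)^d$ (where the unimodular magnetic phase drops out) combined with the partition of unity $\sum_\alpha(\tau_\alpha\mathfrak{g})^2\equiv1$. The only cosmetic difference is that the paper works with the polarized identity $\langle g,f\rangle_{L^2}=\sum_{\tilde\alpha}\langle g,\mathcal{G}_{\tilde\alpha}^A\rangle_{L^2}\langle\mathcal{G}_{\tilde\alpha}^A,f\rangle_{L^2}$ and verifies the unconditional convergence by a direct Cauchy-sequence estimate, whereas you compute the diagonal case and invoke the general theory of Parseval frames for the reconstruction formula.
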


\begin{proof}
    Using at first dominated convergence, then that $\overline{e^{i\varphi(\cdot,\alpha)}}=e^{-i\varphi(\cdot,\alpha)}$, and lastly the Parseval identity for Fourier series, we have:
    \begin{equation}\label{eq:parseval}
        \begin{aligned}
        \langle g,f\rangle_{L^2}&=\sum_{\alpha\in\Z^d}\langle(\tau_\alpha\mathfrak{g})g,(\tau_\alpha\mathfrak{g})f\rangle_{L^2}=\sum_{\alpha\in\Z^d}\langle e^{i\varphi(\cdot,\alpha)}(\tau_\alpha\mathfrak{g})g,e^{i\varphi(\cdot,\alpha)}(\tau_\alpha\mathfrak{g})f\rangle_{L^2}\\
        &=\sum_{\alpha,\alpha'\in\Z^d}\langle g,\mathcal{G}_{\tilde{\alpha}}^A\rangle_{L^2}\langle \mathcal{G}_{\tilde{\alpha}}^A,f\rangle_{L^2}
        \end{aligned}
    \end{equation}
    Specifically:
    \begin{align*}
        \Vert f\Vert_{L^2}^2=\sum_{\alpha,\alpha'\in\Z^d}|\langle \mathcal{G}_{\tilde{\alpha}}^A,f\rangle_{L^2}|^2,
    \end{align*}
    which shows that $(\mathcal{G}_{\tilde{\alpha}}^A)_{\tilde{\alpha}\in\Z^{2d}}$ defines a Parseval frame.

    For the unconditional convergence of the frame decomposition, let $(\tilde{\alpha}_k)_{k\in\N}$ be some enumeration of $\Z^{2d}$. Then for $n<m$
    \begin{align*}
        \left\Vert\sum_{k\leq n}\langle\mathcal{G}_{\tilde{\alpha}_k}^A,f\rangle_{L^2}\mathcal{G}_{\tilde{\alpha}_k}^A-\sum_{k\leq m}\langle\mathcal{G}_{\tilde{\alpha}_k}^A,f\rangle_{L^2}\mathcal{G}_{\tilde{\alpha}_k}^A\right\Vert_{L^2}&\leq\sup_{\Vert g\Vert_{L^2}=1}\left|\left\langle g,\sum_{n<k\leq m}\langle\mathcal{G}_{\tilde{\alpha}_k}^A,f\rangle_{L^2}\mathcal{G}_{\tilde{\alpha}_k}^A\right\rangle\right|\\
        &\leq\sup_{\Vert g\Vert_{L^2}=1}\Vert g\Vert_{L^2}\left(\sum_{n<k\leq m}|\langle\mathcal{G}_{\tilde{\alpha}_k}^A,f\rangle_{L^2}|^2\right)^\frac{1}{2}\\
        &\leq\left(\sum_{n<k}|\langle\mathcal{G}_{\tilde{\alpha}_k}^A,f\rangle_{L^2}|^2\right)^\frac{1}{2}.
    \end{align*}
    This goes to zero as $n\rightarrow\infty$, so $\sum_{\tilde{\alpha}\in\Z^{2d}}\langle\mathcal{G}_{\tilde{\alpha}}^A,f\rangle_{L^2}\mathcal{G}_{\tilde{\alpha}}^A$ is a convergent sequence in $L^2(\R^d)$. The identity \eqref{eq:parseval} shows that the limit must be $f$.
\end{proof}

The frame $(\mathcal{G}_{\tilde{\alpha}}^A)_{\tilde{\alpha}\in\Z^{2d}}$ has strong convergence properties for Schwartz functions and the representation above for $L^2(\R^d)$-functions can be extended to every tempered distribution. Note in this regard that for $f\in L^2(\R^d)$ and $g\in\mathscr{S}(\R^d)$ we have $\langle g,f\rangle_{L^2}=\langle f,\overline{g}\rangle_{\mathscr{S}',\mathscr{S}}$.

\begin{lemma}\label{convergence_schwartz}
    {\color{white}=}
    \begin{enumerate}
        \item\label{convergence_schwartz_i} If $f\in\mathscr{S}(\R^d)$, then for any $n,m\in\N_0$ there exists $C>0,k\in\N_0$ such that
        $$\sup_{\tilde{\alpha}\in\Z^{2d}}\langle\alpha\rangle^n\langle\alpha'\rangle^m|\langle f,\overline{\mathcal{G}_{\tilde{\alpha}}^A}\rangle_{\mathscr{S}',\mathscr{S}}|<C\sum_{\gamma\in\N_0^d,|\gamma|\leq k}\Vert\langle\cdot\rangle^k\partial^\gamma f\Vert_{L^\infty}.$$
        \item\label{convergence_schwartz_ii} If $(N_{\tilde{\alpha}})_{\tilde{\alpha}\in\Z^{2d}}$ are complex numbers such that
        $$\sup_{\tilde{\alpha}\in\Z^{2d}}\langle\alpha\rangle^n\langle\alpha'\rangle^m|N_{\tilde{\alpha}}|<\infty\quad\forall n,m\in\N_0,$$
        then $\sum_{\tilde{\alpha}\in\Z^{2d}}N_{\tilde{\alpha}}\mathcal{G}_{\tilde{\alpha}}^A$ converges absolutely in Schwartz space.
    \end{enumerate}
\end{lemma}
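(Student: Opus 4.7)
The strategy rests on one preliminary observation. On the compact cube $\alpha+[-1,1]^d$ supporting $\mathfrak{g}(\cdot-\alpha)$, the transversal-gauge potential $A$ and all its $x$-derivatives grow at most polynomially (indeed at most linearly) in $\langle\alpha\rangle$, since $B\in BC^\infty(\R^d)$ and derivatives may be moved under the integral defining $A$. Consequently every $x$-derivative of $\varphi(x,\alpha)$, and hence of $e^{\pm i\varphi(x,\alpha)}$, is polynomially bounded in $\langle\alpha\rangle$ uniformly in $x$ on that support.

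For part (\ref{convergence_schwartz_i}), the plan is to start from
$$\langle f,\overline{\mathcal{G}_{\tilde{\alpha}}^A}\rangle_{\mathscr{S}',\mathscr{S}}=\langle\mathcal{G}_{\tilde{\alpha}}^A,f\rangle_{L^2}=(2\pi)^{-d/2}\int e^{-i\varphi(x,\alpha)}\mathfrak{g}(x-\alpha)e^{-i\alpha'\cdot(x-\alpha)}f(x)\,dx,$$
which is already localized to $\alpha+(-1,1)^d$, and to extract the two decays separately. For $\langle\alpha'\rangle^{-m}$ I apply the identity $(1-\Delta_x)^p e^{-i\alpha'\cdot(x-\alpha)}=\langle\alpha'\rangle^{2p}e^{-i\alpha'\cdot(x-\alpha)}$, integrate by parts $p$ times (with $2p\geq m$), and distribute $(1-\Delta_x)^p$ by Leibniz across $e^{-i\varphi(x,\alpha)}\mathfrak{g}(x-\alpha)f(x)$. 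For $\langle\alpha\rangle^{-n}$ I use $\langle x\rangle\geq c\langle\alpha\rangle$ on the support to replace $|\partial^\beta f(x)|$ by $\langle\alpha\rangle^{-k}\|\langle\cdot\rangle^k\partial^\beta f\|_{L^\infty}$, then choose $k$ large enough to beat the polynomial $\langle\alpha\rangle$-growth contributed by the $\varphi$-derivatives and still leave $\langle\alpha\rangle^{-n}$. The resulting bound is a finite sum of terms of the form $\|\langle\cdot\rangle^k\partial^\beta f\|_{L^\infty}$, which is a continuous seminorm on $\mathscr{S}(\R^d)$.

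For part (\ref{convergence_schwartz_ii}), I will show that every Schwartz seminorm of $\mathcal{G}_{\tilde{\alpha}}^A$ is polynomially bounded in $\langle\alpha\rangle$ and $\langle\alpha'\rangle$, so that the rapid decay of $(N_{\tilde{\alpha}})$ forces absolute summability. Differentiating $e^{i\varphi(x,\alpha)}\mathfrak{g}(x-\alpha)e^{i\alpha'\cdot(x-\alpha)}$ by Leibniz, each derivative of $e^{i\alpha'\cdot(x-\alpha)}$ contributes a factor $\langle\alpha'\rangle$, derivatives of $\mathfrak{g}$ are uniformly bounded, and derivatives of $e^{i\varphi(x,\alpha)}$ contribute polynomial factors in $\langle\alpha\rangle$ by the preliminary observation. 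Combined with $\langle x\rangle\leq C\langle\alpha\rangle$ on the support, this yields an estimate of the form $\sup_x\langle x\rangle^n|\partial^\gamma\mathcal{G}_{\tilde{\alpha}}^A(x)|\leq C_{n,\gamma}\langle\alpha\rangle^{P(n,\gamma)}\langle\alpha'\rangle^{|\gamma|}$, after which the hypothesis on $(N_{\tilde{\alpha}})$ gives $\sum_{\tilde{\alpha}}|N_{\tilde{\alpha}}|\sup_x\langle x\rangle^n|\partial^\gamma\mathcal{G}_{\tilde{\alpha}}^A(x)|<\infty$ for every $n$ and $\gamma$, which is absolute convergence in $\mathscr{S}(\R^d)$.

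The main obstacle in both parts is the book-keeping of polynomial growth exponents introduced by the magnetic phase $\varphi$; this is precisely what the assumption $B\in BC^\infty$ tames, so there is no analytic difficulty beyond a careful choice of the parameters $p$ and $k$ in part (\ref{convergence_schwartz_i}) and of the exponent $P(n,\gamma)$ in part (\ref{convergence_schwartz_ii}).
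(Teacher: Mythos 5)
Your proposal is correct and follows essentially the same route as the paper: for (\ref{convergence_schwartz_i}) integration by parts against $e^{-i\alpha'\cdot(x-\alpha)}$ combined with the comparability of $\langle x\rangle$ and $\langle\alpha\rangle$ on $\supp\mathfrak{g}(\cdot-\alpha)$ and polynomial control of the $\varphi$-derivatives, and for (\ref{convergence_schwartz_ii}) polynomial bounds on the Schwartz seminorms of $\mathcal{G}_{\tilde{\alpha}}^A$ (the paper's estimate \eqref{eq:schwartz_est_frame}) absorbed by the rapid decay of $(N_{\tilde{\alpha}})$. The only differences are cosmetic bookkeeping (e.g.\ using $\langle x\rangle\geq c\langle\alpha\rangle$ on the support instead of Peetre's inequality in the form $\langle\alpha\rangle^{l}\leq C\langle x\rangle^{l}\langle x-\alpha\rangle^{l}$).
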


\begin{proof}
    Let $n,m\in\N_0$ be given. Then
    $$\langle\alpha\rangle^n\langle\alpha'\rangle^m|\langle f,\overline{\mathcal{G}_{\tilde{\alpha}}^A}\rangle_{\mathscr{S}',\mathscr{S}}|\leq C_1\langle\alpha\rangle^n\left|\int dxf(x)e^{-i\varphi(x,\alpha)}\mathfrak{g}(x-\alpha)(1-\Delta)^me^{-i\alpha'\cdot (x-\alpha)}\right|$$
    for some $C_1>0$. Using integration by parts we can move $(1-\Delta)^m$ onto the product $f(x)e^{-i\varphi(x,\alpha)}\mathfrak{g}(x-\alpha)$. By Leibniz rule we get derivatives of $f$ and $\mathfrak{g}$ along with factors from the phase. Using the triangle inequality we get a sum of terms of the following kind:
    $$C_2\langle\alpha\rangle^l\int dx|\partial^\beta f(x)||\partial^\gamma\mathfrak{g}(x-\alpha)|$$
    where $C_2>0$ and $l>n$ with the increase due to the phase factors. The above terms can be bounded by
    \begin{align*}
        \langle\alpha\rangle^l\int dx|\partial^\beta f(x)||\partial^\gamma\mathfrak{g}(x-\alpha)|&\leq C_3\int dx\langle x\rangle^l|\partial^\beta f(x)|\langle x-\alpha\rangle^l|\partial^\gamma\mathfrak{g}(x-\alpha)|\\
        &\leq C_3\Vert\langle\cdot\rangle^l\partial^\beta f\Vert_{L^\infty}\int dx\langle x\rangle^l|\partial^\gamma\mathfrak{g}(x)|,
    \end{align*}
    where we used Peetre's inequality and again $C_3>0$. This finishes the proof of \ref{convergence_schwartz_i}.

    For \ref{convergence_schwartz_ii} we show that for every $\gamma\in\N_0^d,k\in\N_0$
    \begin{equation}\label{eq:absolutesum}
        \sum_{\tilde{\alpha}\in\Z^{2d}}|N_{\tilde{\alpha}}|\Vert\langle\cdot\rangle^k\partial^\gamma\mathcal{G}_{\tilde{\alpha}}^A\Vert_{L^\infty}<\infty.
    \end{equation}
    This would imply that $\sum_{\tilde{\alpha}\in\Z^{2d}}N_{\tilde{\alpha}}\mathcal{G}_{\tilde{\alpha}}^A$ is a Cauchy sequence in Schwartz space, hence convergent. To show \eqref{eq:absolutesum} we have to bound $\Vert\langle\cdot\rangle^k\partial^\gamma\mathcal{G}_{\tilde{\alpha}}^A\Vert_{L^\infty}$ appropriately and use the condition in \ref{convergence_schwartz_ii}. By Peetre's inequality we get:
    $$\Vert\langle\cdot\rangle^k\partial^\gamma\mathcal{G}_{\tilde{\alpha}}^A\Vert_{L^\infty}\leq C_1\langle\alpha\rangle^k\Vert\langle\cdot-\alpha\rangle^k\partial^\gamma\mathcal{G}_{\tilde{\alpha}}^A\Vert_{L^\infty}$$
    with $C_1>0$. Using Leibniz rule we can bound this by a sum of terms of the kind:
    \begin{align*}
        C_2\langle\alpha\rangle^k\langle\alpha'\rangle^l\Vert\langle\cdot-\alpha\rangle^k\mathfrak{g}(\cdot-\alpha)\Vert_{L^\infty}=C_2\langle\alpha\rangle^k\langle\alpha'\rangle^l\Vert\langle\cdot\rangle^k\mathfrak{g}\Vert_{L^\infty}
    \end{align*}
    with again $C_2>0$. Thus there exists a $C_3>0$ such that
    \begin{equation}\label{eq:schwartz_est_frame}
        \Vert\langle\cdot\rangle^k\partial^\gamma\mathcal{G}_{\tilde{\alpha}}^A\Vert_{L^\infty}\leq C_3\langle\alpha\rangle^k\langle\alpha'\rangle^l\quad\forall\tilde{\alpha}\in\Z^{2d},
    \end{equation}
    which together with the condition in \ref{convergence_schwartz_ii} implies \eqref{eq:absolutesum}.
\end{proof}

\begin{lemma}\label{convergence_tempered}
    {\color{white}=}
    \begin{enumerate}
        \item\label{convergence_tempered_i} If $F\in\mathscr{S}'(\R^d)$, then for some $n,m\in\N_0$ we have
        $$\sup_{\tilde{\alpha}\in\Z^{2d}}\langle\alpha\rangle^{-n}\langle\alpha'\rangle^{-m}|\langle F,\overline{\mathcal{G}_{\tilde{\alpha}}^A}\rangle_{\mathscr{S}',\mathscr{S}}|<\infty.$$
        \item\label{convergence_tempered_ii} If $(N_{\tilde{\alpha}})_{\tilde{\alpha}\in\Z^{2d}}$ are complex numbers for which there exists $n,m\in\N_0$ such that
        $$\sup_{\tilde{\alpha}\in\Z^{2d}}\langle\alpha\rangle^{-n}\langle\alpha'\rangle^{-m}|N_{\tilde{\alpha}}|<\infty,$$
        then $\sum_{\tilde{\alpha}\in\Z^{2d}}N_{\tilde{\alpha}}\mathcal{G}_{\tilde{\alpha}}^A$ converges to a tempered distribution uniformly on all bounded subsets of $\mathscr{S}(\R^d)$.
    \end{enumerate}
\end{lemma}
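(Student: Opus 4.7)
My plan is to reduce both parts to the polynomial bound \eqref{eq:schwartz_est_frame} from the proof of Lemma \ref{convergence_schwartz}, combined with the standard seminorm characterization of tempered distributions. Throughout I use the fact that the arguments in Lemma \ref{convergence_schwartz} go through verbatim with $\mathcal{G}_{\tilde{\alpha}}^A$ replaced by $\overline{\mathcal{G}_{\tilde{\alpha}}^A}$, since $|\partial^\gamma \mathcal{G}_{\tilde{\alpha}}^A|=|\partial^\gamma \overline{\mathcal{G}_{\tilde{\alpha}}^A}|$ and only the sign of the phase factors changes in the calculation.

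For part \ref{convergence_tempered_i}, I would invoke continuity of $F$ on $\mathscr{S}(\R^d)$ to obtain $C>0$ and $k\in\N_0$ such that
$$|\langle F,\phi\rangle_{\mathscr{S}',\mathscr{S}}|\leq C\sum_{|\gamma|\leq k}\Vert\langle\cdot\rangle^k\partial^\gamma \phi\Vert_{L^\infty}$$
for every $\phi\in\mathscr{S}(\R^d)$. Specializing to $\phi=\overline{\mathcal{G}_{\tilde{\alpha}}^A}$ and applying \eqref{eq:schwartz_est_frame} immediately yields a bound $|\langle F,\overline{\mathcal{G}_{\tilde{\alpha}}^A}\rangle_{\mathscr{S}',\mathscr{S}}|\leq C'\langle\alpha\rangle^k\langle\alpha'\rangle^l$ for some $l\in\N_0$ depending on $k$, which is exactly the desired polynomial growth statement.

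For part \ref{convergence_tempered_ii}, I would define the candidate limit $F(f):=\sum_{\tilde{\alpha}\in\Z^{2d}}N_{\tilde{\alpha}}\langle\mathcal{G}_{\tilde{\alpha}}^A,f\rangle_{\mathscr{S}',\mathscr{S}}$. The key input is the analogue of Lemma \ref{convergence_schwartz}\ref{convergence_schwartz_i} applied to $\langle\mathcal{G}_{\tilde{\alpha}}^A,f\rangle_{\mathscr{S}',\mathscr{S}}$: for any prescribed $n',m'\in\N_0$ there exist $C_1>0$ and $k\in\N_0$ (both depending on $n',m'$) such that
$$|\langle\mathcal{G}_{\tilde{\alpha}}^A,f\rangle_{\mathscr{S}',\mathscr{S}}|\leq C_1\langle\alpha\rangle^{-n'}\langle\alpha'\rangle^{-m'}\sum_{|\gamma|\leq k}\Vert\langle\cdot\rangle^k\partial^\gamma f\Vert_{L^\infty}.$$
Picking $n'>n+d$ and $m'>m+d$ relative to the $n,m$ from the hypothesis on $(N_{\tilde{\alpha}})$ makes the lattice sum $\sum_{\tilde{\alpha}}\langle\alpha\rangle^{n-n'}\langle\alpha'\rangle^{m-m'}$ absolutely convergent; hence so is the series defining $F(f)$, and one also obtains the Schwartz seminorm bound $|F(f)|\leq C_2\sum_{|\gamma|\leq k}\Vert\langle\cdot\rangle^k\partial^\gamma f\Vert_{L^\infty}$, placing $F$ in $\mathscr{S}'(\R^d)$.

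The only step really requiring attention — and the one I would flag as the main obstacle — is upgrading pointwise convergence to uniform convergence on bounded $B\subset\mathscr{S}(\R^d)$. Since all Schwartz seminorms are uniformly bounded on $B$, the tail at scale $R$ is majorized by a fixed multiple of $\sum_{|\tilde{\alpha}|>R}\langle\alpha\rangle^{n-n'}\langle\alpha'\rangle^{m-m'}$, independently of $f\in B$, and this tends to zero as $R\to\infty$. Beyond this bookkeeping the statement is essentially a direct dual of Lemma \ref{convergence_schwartz}, and I foresee no further difficulties.
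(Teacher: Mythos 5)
Your proposal is correct and follows essentially the same route as the paper: part \ref{convergence_tempered_i} by pairing the continuity seminorm estimate for $F$ with the polynomial bound \eqref{eq:schwartz_est_frame}, and part \ref{convergence_tempered_ii} by feeding the decay estimate of Lemma \ref{convergence_schwartz}\ref{convergence_schwartz_i} into the lattice sum to get absolute convergence, a tempered-distribution bound, and uniform convergence on bounded sets from the uniform boundedness of the seminorms. Your explicit tail estimate for the uniform convergence is just a slightly more detailed version of the paper's closing remark.
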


\begin{proof}
    The first point \ref{convergence_tempered_i} is rather straightforward: There exists $C>0$ and $k\in\N_0$ depending on $F$ such that
    $$|\langle F,f\rangle_{\mathscr{S}',\mathscr{S}}|\leq C\sum_{\gamma\in\N_0^d,|\gamma|\leq k}\Vert\langle\cdot\rangle^k\partial^\gamma f\Vert_{L^\infty}$$
    holds for all $f\in\mathscr{S}(\R^d)$. Letting $f=\mathcal{G}_{\tilde{\alpha}}^A$ and using estimates such as \eqref{eq:schwartz_est_frame}, we get
    $$\sup_{\tilde{\alpha}\in\Z^{2d}}\langle\alpha\rangle^{-n}\langle\alpha'\rangle^{-m}|\langle F,\overline{\mathcal{G}_{\tilde{\alpha}}^A}\rangle_{\mathscr{S}',\mathscr{S}}|<\infty$$
    for some $n,m\in\N_0$.

    Next, from Lemma \ref{convergence_schwartz} we know that the sum
    $$\sum_{\tilde{\alpha}\in\Z^{2d}}N_{\tilde{\alpha}}\langle\mathcal{G}_{\tilde{\alpha}}^A,f\rangle_{\mathscr{S}',\mathscr{S}}$$
    makes sense for any Schwartz function $f$ and that
    $$\sum_{\tilde{\alpha}\in\Z^{2d}}|N_{\tilde{\alpha}}||\langle\mathcal{G}_{\tilde{\alpha}}^A,f\rangle_{\mathscr{S}',\mathscr{S}}|\leq C\sum_{\gamma\in\N_0^d,|\gamma|\leq k}\Vert\langle\cdot\rangle^k\partial^\gamma f\Vert_{L^\infty}$$
    for some $C>0$ and $k\in\N_0$. Thus $\mathscr{S}(\R^d)\ni f\mapsto\sum_{\tilde{\alpha}\in\Z^{2d}}N_{\tilde{\alpha}}\langle\mathcal{G}_{\tilde{\alpha}}^A,f\rangle_{\mathscr{S}',\mathscr{S}}$ is a tempered distribution. Finally, the sums $\sum_{\tilde{\alpha}\in\Z^{2d}}N_{\tilde{\alpha}}\langle\mathcal{G}_{\tilde{\alpha}}^A,f\rangle_{\mathscr{S}',\mathscr{S}}$ converges uniformly for $f$ in a bounded subset of $\mathscr{S}(\R^d)$ since $\sum_{\gamma\in\N_0^d,|\gamma|\leq k}\Vert\langle\cdot\rangle^k\partial^\gamma f\Vert_{L^\infty}$ would be bounded.
\end{proof}

\subsection{Matrix Representation}
For every continuous linear map $T\colon\mathscr{S}(\R^d)\rightarrow \mathscr{S}'(\R^d)$ we get a matrix representation using Lemma \ref{convergence_schwartz} and Lemma \ref{convergence_tempered} (see also \cite[Proposition 2.4]{CorneanHelfferPurice2024}):
\begin{equation}\label{eq:matrix_decomposition}
    Tf=\sum_{\tilde{\alpha},\tilde{\beta}\in\Z^{2d}}\langle T\mathcal{G}_{\tilde{\beta}}^A,\overline{\mathcal{G}_{\tilde{\alpha}}^A}\rangle_{\mathscr{S}',\mathscr{S}}\langle f,\overline{\mathcal{G}_{\tilde{\beta}}^A}\rangle_{\mathscr{S}',\mathscr{S}}\mathcal{G}_{\tilde{\alpha}}^A=\sum_{\tilde{\alpha},\tilde{\beta}\in\Z^{2d}}\mathbb{M}_{\tilde{\alpha},\tilde{\beta}}^A(T)\langle f,\overline{\mathcal{G}_{\tilde{\beta}}^A}\rangle_{\mathscr{S}',\mathscr{S}}\mathcal{G}_{\tilde{\alpha}}^A
\end{equation}
where
$$\mathbb{M}_{\tilde{\alpha},\tilde{\beta}}^A(T):=\langle T\mathcal{G}_{\tilde{\beta}}^A,\overline{\mathcal{G}_{\tilde{\alpha}}^A}\rangle_{\mathscr{S}',\mathscr{S}}.$$
We let $\mathbb{M}^A(T)$ denote the matrix $\left(\mathbb{M}_{\tilde{\alpha},\tilde{\beta}}^A(T)\right)_{\tilde{\alpha},\tilde{\beta}\in\Z^{2d}}$. Note the distributional kernel of $T$ is given by:
$$\sum_{\tilde{\alpha},\tilde{\beta}\in\Z^{2d}}\mathbb{M}_{\tilde{\alpha},\tilde{\beta}}^A(T)\mathcal{G}_{\tilde{\alpha}}^A\otimes\overline{\mathcal{G}_{\tilde{\beta}}^A}$$
One then obtains the result (generalization of \cite[Theorem 3.1]{CorneanHelfferPurice2024}):

\begin{theorem}\label{thm:matrix_rep}
    Fix $t\in[0,1]$ and a tempered weight $M$. Then:
    \begin{enumerate}
        \item\label{matrix_rep_i} For any $n,m\in\N_0$ there exists $C>0,k\in\N_0$ such that for all $\Phi\in S_0(M)$:
        \begin{equation}\label{eq:matrix_specific}
        \begin{aligned}
            \sup_{\tilde{\alpha},\tilde{\beta}\in\Z^{2d}}&\langle\alpha-\beta\rangle^n\langle\alpha'-\beta'\rangle^mM(t\alpha+(1-t)\beta,(1-t)\alpha'+t\beta')^{-1}\left|\mathbb{M}_{\tilde{\alpha},\tilde{\beta}}^A\left(\mathfrak{Op}_t^A(\Phi)\right)\right|\\
            &< C\Vert\Phi\Vert_{S_0(M),k}
        \end{aligned}
        \end{equation}
        \item\label{matrix_rep_ii} If $(N_{\tilde{\alpha},\tilde{\beta}})_{\tilde{\alpha},\tilde{\beta}\in\Z^{2d}}$ are complex numbers such that
        \begin{equation}\label{eq:matrix}
            \sup_{\tilde{\alpha},\tilde{\beta}\in\Z^{2d}}\langle\alpha-\beta\rangle^n\langle\alpha'-\beta'\rangle^mM(t\alpha+(1-t)\beta,(1-t)\alpha'+t\beta')^{-1}\left|N_{\tilde{\alpha},\tilde{\beta}}\right|< \infty\quad\forall n,m\in\N_0,
        \end{equation}
        then $\sum_{\tilde{\alpha},\tilde{\beta}\in\Z^{2d}}N_{\tilde{\alpha},\tilde{\beta}}(\mathcal{W}_t^A)^{-1}\mathcal{G}_{\tilde{\alpha}}^A\otimes\overline{\mathcal{G}_{\tilde{\beta}}^A}$ converges uniformly on compacts towards a member of $S_0(M)$ and each semi-norm of the limit are bounded by a constant times \eqref{eq:matrix} for some $n,m$.
    \end{enumerate}
\end{theorem}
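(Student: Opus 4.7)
The plan is to adapt the Weyl-case argument of \cite[Theorem 3.1]{CorneanHelfferPurice2024} to arbitrary $t\in[0,1]$ and $M$-weighted symbols. For part \ref{matrix_rep_i}, I would insert the explicit form of the frame elements into \eqref{eq:defn_magdiff} and perform three changes of variables: $u=x-\alpha$, $v=y-\beta$ (localizing the windows to $(-1,1)^d$), then $s=tu+(1-t)v$, $r=u-v$ (Jacobian one, with $s,r$ remaining in a fixed compact set), and finally $\eta=\xi-\Xi_0$, where $X_0=t\alpha+(1-t)\beta$ and $\Xi_0=(1-t)\alpha'+t\beta'$. A direct computation shows that the compound oscillating factor $e^{i\xi\cdot(x-y)-i\alpha'\cdot(x-\alpha)+i\beta'\cdot(y-\beta)}$ simplifies to $e^{i\Xi_0\cdot(\alpha-\beta)}e^{i\eta\cdot(r+\alpha-\beta)}e^{i(\beta'-\alpha')\cdot s}$, while the symbol appears as $\Phi(s+X_0,\eta+\Xi_0)$. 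The magnetic phase $\varphi(x,y)-\varphi(x,\alpha)+\varphi(y,\beta)$ becomes a smooth function of $(s,r)$ with derivatives bounded uniformly in $\alpha,\beta$, via the standard Stokes-type triangle identity from \cite{CorneanHelfferPurice2024}.

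With this normal form, decay is extracted by repeated integration by parts. Using $\langle r+\alpha-\beta\rangle^{-2}(1-\Delta_\eta)$ preserves the $\eta$-phase and, since $r$ is bounded, trades $\langle\alpha-\beta\rangle^{-2}$ for $\xi$-derivatives of $\Phi$; using $\langle\alpha'-\beta'\rangle^{-2}(1-\Delta_s)$ preserves the $s$-phase and trades $\langle\alpha'-\beta'\rangle^{-2}$ for $x$-derivatives of $\Phi$ together with bounded window and magnetic-phase contributions. The symbol derivatives are pointwise bounded by $\Vert\Phi\Vert_{S_0(M),\cdot}\,M(s+X_0,\eta+\Xi_0)$, and Peetre's inequality \eqref{eq:peetre} yields $M(s+X_0,\eta+\Xi_0)\leq C M(X_0,\Xi_0)\langle\eta\rangle^a$ (using that $s$ is bounded). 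One performs enough further $\eta$-integrations by parts that both the desired decay $\langle\alpha-\beta\rangle^{-n}$ and the integrability $\int\langle\eta\rangle^{a-2N'}d\eta<\infty$ are achieved, which fixes the semi-norm order $k$ in \eqref{eq:matrix_specific}.

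For part \ref{matrix_rep_ii}, the substitution $x=z+(1-t)w$, $y=z-tw$ in the definition of $\mathcal{W}_t^A$ identifies the inverse as a $w$-Fourier transform after removing the magnetic phase,
$$(\mathcal{W}_t^A)^{-1}K(z,\xi)=\int dw\,e^{-i\xi\cdot w}e^{-i\varphi(z+(1-t)w,z-tw)}K(z+(1-t)w,z-tw).$$
Applied to $K=\mathcal{G}_{\tilde{\alpha}}^A\otimes\overline{\mathcal{G}_{\tilde{\beta}}^A}$, the two windows force $w$ into a compact set and $z$ into $X_0+(-1,1)^d$ by the same convex-combination argument as above. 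Repeated integration by parts in $w$ against $e^{-i\xi\cdot w}$, combined with the linear phases from the Gabor frame and the smooth magnetic correction, yields a uniform bound $|\partial^\gamma_{(z,\xi)}(\mathcal{W}_t^A)^{-1}(\mathcal{G}_{\tilde{\alpha}}^A\otimes\overline{\mathcal{G}_{\tilde{\beta}}^A})(z,\xi)|\leq C_{k,\gamma}\langle\xi-\Xi_0\rangle^{-k}$ in $(\tilde{\alpha},\tilde{\beta})$ for every $k\in\N_0$ and every multi-index $\gamma$. Combined with the $z$-localization, the hypothesis \eqref{eq:matrix}, and another Peetre estimate relating $M(X_0,\Xi_0)$ to $M(z,\xi)$, this gives termwise summability of $\sum_{\tilde{\alpha},\tilde{\beta}} N_{\tilde{\alpha},\tilde{\beta}}(\mathcal{W}_t^A)^{-1}(\mathcal{G}_{\tilde{\alpha}}^A\otimes\overline{\mathcal{G}_{\tilde{\beta}}^A})$, locally uniform convergence of the series and of all its derivatives, and the claimed bound on the $S_0(M)$-semi-norms of the limit.

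The principal obstacle in both directions is the bookkeeping of the magnetic phase: ensuring that despite the possibly polynomial growth of $A$, the compound phases split into transparent linear contributions (which combine cleanly with the Gabor-frame phases carried by $\alpha',\beta'$) plus bounded-with-bounded-derivatives flux contributions coming from the Stokes triangle identity. This decomposition, systematized in \cite{CorneanHelfferPurice2024}, is what reduces the remainder of the argument to routine oscillatory-integral estimates.
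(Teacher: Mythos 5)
Your proposal follows the paper's proof essentially step for step: the same composite change of variables (your $(s,r,\eta)$ are the paper's $(u,v,\zeta)$), the same Stokes-triangle decomposition of the magnetic phase, the same integration-by-parts scheme against the three oscillating factors for part \ref{matrix_rep_i}, and the same explicit computation of $(\mathcal{W}_t^A)^{-1}\mathcal{G}_{\tilde{\alpha}}^A\otimes\overline{\mathcal{G}_{\tilde{\beta}}^A}$ with $w$-localization for part \ref{matrix_rep_ii}. The overall architecture is sound.

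There is, however, one recurring inaccuracy you should repair. You assert in part \ref{matrix_rep_i} that the flux phase is "a smooth function of $(s,r)$ with derivatives bounded uniformly in $\alpha,\beta$", and in part \ref{matrix_rep_ii} that $|\partial^\gamma_{(z,\xi)}(\mathcal{W}_t^A)^{-1}(\mathcal{G}_{\tilde{\alpha}}^A\otimes\overline{\mathcal{G}_{\tilde{\beta}}^A})(z,\xi)|\leq C_{k,\gamma}\langle\xi-\Xi_0\rangle^{-k}$ uniformly in $(\tilde{\alpha},\tilde{\beta})$. Neither claim is true. The relevant triangles have $\alpha$ and $\beta$ among their vertices, so by the estimate \eqref{eq:matrix_proof0} each derivative of the flux factor costs a power of $\langle\alpha-\beta\rangle$; likewise in part \ref{matrix_rep_ii}, $\xi$-derivatives hit $e^{-i\xi\cdot(w+\alpha-\beta)}$ and $z$-derivatives hit the linear frame phases, so the correct bound carries additional factors $\langle\alpha-\beta\rangle^{k_5}\langle\alpha'-\beta'\rangle^{k_6}$ with exponents depending on $\gamma$ and $k$. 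This does not sink the argument: in part \ref{matrix_rep_i} the extra powers of $\langle\alpha-\beta\rangle$ are absorbed by performing additional $\eta$-integrations by parts (each of which yields $\langle r+\alpha-\beta\rangle^{-1}\lesssim\langle r\rangle\langle\alpha-\beta\rangle^{-1}$ and does not touch the $\eta$-independent magnetic phase), and in part \ref{matrix_rep_ii} the hypothesis \eqref{eq:matrix} holds for \emph{all} $n,m$ and therefore dominates any fixed polynomial in $\langle\alpha-\beta\rangle$ and $\langle\alpha'-\beta'\rangle$. But your bookkeeping must record these losses explicitly, since the choice of how many integrations by parts to perform — and hence the semi-norm order $k$ in \eqref{eq:matrix_specific} and the pair $(n,m)$ controlling the limit's semi-norms — depends on them. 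With that amendment, and the Peetre step \eqref{eq:gen_peetre} you already invoke, the proof is the paper's.
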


Before we give the proof we note that by Stokes theorem:
$$\varphi(x,y)+\varphi(y,z)+\varphi(z,x)=\int_{<x,y,z>}B\quad\forall x,y,z\in\R^d$$
where $<x,y,z>$ is the triangle with vertices $x,y,z$. The expression $\int_{<x,y,z>}B$ is differentiable in $x,y,z$ and for every $\delta\in\N_0^{3d}$ there exists $C>0$ such that
\begin{equation}\label{eq:matrix_proof0}
    \left|\partial_{x,y,z}^\delta\int_{<x,y,z>}B\right|\leq C\langle x-y\rangle\langle y-z\rangle
\end{equation}

\noindent\textit{Proof of \ref{matrix_rep_i}.} Let $n,m\in\N_0$ and $\tilde{\alpha},\tilde{\beta}\in\Z^{2d}$ be given. Recalling \eqref{eq:defn_magdiff}, changing variables to $u:= t(x-\alpha)+(1-t)(y-\beta)$, $v:= x-\alpha-(y-\beta)$, and $\zeta:=\xi-(1-t)\alpha'-t\beta'$, and using Stokes theorem, we get:
\begin{equation}\label{eq:matrix_proof1}
    \begin{aligned}
        \mathbb{M}_{\tilde{\alpha},\tilde{\beta}}^A&(\mathfrak{Op}_t^A(\Phi))
        \\
        &=(2\pi)^{-2d}e^{i((1-t)\alpha'+t\beta')\cdot(\alpha-\beta)}e^{i\varphi(\alpha,\beta)}\int d\zeta\int dudve^{i\zeta\cdot(v+\alpha-\beta)}e^{iu\cdot(\beta'-\alpha')}\\
        &\quad\quad \times e^{i(\int_{<u-tv+\beta,\beta,\alpha>}B+\int_{<u+(1-t)v+\alpha,u-tv+\beta,\alpha>}B)}\\
        &\quad\quad \times\Phi(u+t\alpha+(1-t)\beta,\zeta+(1-t)\alpha'+t\beta')\mathfrak{g}(u-tv)\mathfrak{g}(u+(1-t)v)
    \end{aligned}
\end{equation}
First, observe that due to the support of $\mathfrak{g}$, the integrand above has compact support in $u,v$, specifically support contained in $(-2,2)^{2d}$. Secondly, abusing the exponential factors through partial integration we can create factors of $\langle\zeta\rangle^{-1}$, $\langle v+\alpha-\beta\rangle^{-1}$, and $\langle\beta'-\alpha'\rangle^{-1}$. The cost of doing this is getting derivatives of $\Phi$ and $\mathfrak{g}$, which are of no concern, and derivatives of the exponential factors with the magnetic field. The latter will leave us with factors that can be bounded by a constant times:
$$\langle u\rangle^k\langle v\rangle^k\langle\alpha-\beta\rangle^k$$
for some $k>0$ (use \eqref{eq:matrix_proof0} and Peetre's inequality). The factor $\langle u\rangle^k\langle v\rangle^k$ is bounded by the support of the integrand and $\langle\alpha-\beta\rangle^k$ can be made to disappear by additional partial integration in $\zeta$ (which does not affect the factors with magnetic fields) and Peetre's inequality $\langle v+\alpha-\beta\rangle^{-1}\leq C_1\langle v\rangle\langle \alpha-\beta\rangle^{-1}$, $C_1>0$. 

All in all, the absolute value of \eqref{eq:matrix_proof1} can be bounded by a sum of terms of the kind:
\begin{equation}\label{eq:matrix_proof2}
    \begin{aligned}
        C_2&\langle\alpha-\beta\rangle^{-n}\langle\alpha'-\beta'\rangle^{-m}\int d\zeta\int_{(-2,2)^{2d}}dudv\langle\zeta\rangle^{-d-1-a}\\
        &\times|\partial_{u,\zeta}^{\delta_1}\Phi(u+t\alpha+(1-t)\beta,\zeta+(1-t)\alpha'+t\beta')||\partial_{u,v}^{\delta_2}\mathfrak{g}(u-tv)||\partial_{u,v}^{\delta_3}\mathfrak{g}(u+(1-t)v)|
    \end{aligned}
\end{equation}
with $\delta_1,\delta_2,\delta_2\in\N_0^{2d}$ and $C_2>0$. The last step in the proof is taking care of the derivatives of $\Phi$. From $\Phi\in S_0(M)$ and \eqref{eq:gen_peetre}:
\begin{align*}
    |\partial_{u,\zeta}^{\delta_1}\Phi(u+t\alpha+(1-t)\beta&,\zeta+(1-t)\alpha'+t\beta')|\\
    &\leq C_3\Vert\Phi\Vert_{S_0(M),l}\langle u\rangle^a\langle\zeta\rangle^aM(t\alpha+(1-t)\beta,(1-t)\alpha'+t\beta')
\end{align*}
for all $u,\zeta\in\R^d$ and some $C_3,a>0$, so \eqref{eq:matrix_proof2} is bounded by
\begin{equation*}
    \begin{aligned}
        C_4\Vert\Phi\Vert_{S_0(M),l}&\langle\alpha-\beta\rangle^{-n}\langle\alpha'-\beta'\rangle^{-m}M(t\alpha+(1-t)\beta,(1-t)\alpha'+t\beta')\\
        &\times \int d\zeta\int_{(-2,2)^{2d}}dudv\langle\zeta\rangle^{-d-1}|\partial_{u,v}^{\delta_2}\mathfrak{g}(u-tv)||\partial_{u,v}^{\delta_3}\mathfrak{g}(u+(1-t)v)|
    \end{aligned}
\end{equation*}
with $l\in\N_0$ and $C_4>0$. This proves \ref{matrix_rep_i}.\qed

\noindent\textit{Proof of \ref{matrix_rep_ii}.} We want to prove that $\sum_{\tilde{\alpha},\tilde{\beta}\in\Z^{2d}}N_{\tilde{\alpha},\tilde{\beta}}(\mathcal{W}_t^A)^{-1}\mathcal{G}_{\tilde{\alpha}}^A\otimes\overline{\mathcal{G}_{\tilde{\beta}}^A}$ converges uniformly on compacts towards an element of $S_0(M)$. We do this by controlling the sum, and the sum of term-wise derivatives, in the differences $\alpha-\beta,\alpha'-\beta'$ and sums $t\alpha+(1-t)\beta,(1-t)\alpha'+t\beta'$. 

At first fix $\tilde{\alpha},\tilde{\beta}\in\Z^{2d}$. We have
\begin{equation}\label{eq:matrix_proof3}
    \begin{aligned}
        (\mathcal{W}_t^A)^{-1}\mathcal{G}_{\tilde{\alpha}}^A&\otimes\overline{\mathcal{G}_{\tilde{\beta}}^A}(u,\xi)=\int dve^{-i\xi\cdot v}e^{i\varphi(u,v)}\mathcal{G}_{\tilde{\alpha}}^A\otimes\overline{\mathcal{G}_{\tilde{\beta}}^A}(u+(1-t)v,u-tv)\\
        &=(2\pi)^{-d}e^{i(\beta\cdot\beta'-\alpha\cdot\alpha')}e^{i\varphi(\beta,\alpha)}\int dve^{i((1-t)\alpha'+t\beta'-\xi)\cdot v}e^{iu\cdot(\alpha'-\beta')}\\
        &\quad\times e^{i(\int_{<u-tv,\beta,\alpha>}B+\int_{<u-tv,u+(1-t)v,\alpha>}B)}\mathfrak{g}(u+(1-t)v-\alpha)\mathfrak{g}(u-tv-\beta)\\
        &=(2\pi)^{-d}e^{i(\beta\cdot\beta'-\alpha\cdot\alpha')}e^{i\varphi(\beta,\alpha)}\int dwe^{i((1-t)\alpha'+t\beta'-\xi)\cdot (w+\alpha-\beta)}e^{iu\cdot(\alpha'-\beta')}\\
        &\quad\times e^{i(\int_{<u-t(w+\alpha-\beta),\beta,\alpha>}B+\int_{<u-t(w+\alpha-\beta),u+(1-t)(w+\alpha-\beta),\alpha>}B)}\\
        &\quad\times\mathfrak{g}(u+(1-t)w-t\alpha-(1-t)\beta)\mathfrak{g}(u-tw-t\alpha-(1-t)\beta)
    \end{aligned}
\end{equation}
where we used the translation $w:= v-\alpha+\beta$. If we again study the factors of $\mathfrak{g}$ we see that the integrand has support contained in $(-2,2)^d$, and if $u-t\alpha-(1-t)\beta\notin(-1,1)^d$, then $(\mathcal{W}_t^A)^{-1}\mathcal{G}_{\tilde{\alpha}}^A\otimes\overline{\mathcal{G}_{\tilde{\beta}}^A}(u,\xi)=0$. We can also use partial integration and the exponential factors to generate factors of $\langle\xi-(1-t)\alpha'-t\beta'\rangle^{-1}$ at the cost of derivatives of $\mathfrak{g}$ and of the exponential factors with the magnetic field. The latter can be bounded by a constant times:
$$\langle u-tw-t\alpha-(1-t)\beta\rangle^{k_1}\langle\alpha-\beta\rangle^{k_1}\langle w\rangle^{k_1}\langle u+(1-t)w-t\alpha-(1-t)\beta\rangle^{k_1}$$
for some $k_1>0$ (\eqref{eq:matrix_proof0} and Peetre's inequality). The only factor that is not a priori bounded is $\langle\alpha-\beta\rangle^{k_1}$.

Differentiating \eqref{eq:matrix_proof3} in $u$ gives derivatives of $\mathfrak{g}$ and of the factors with the magnetic field discussed before, as well as factors of $\langle \alpha'-\beta'\rangle^{k_2}$ for some $k_2>0$. Differentiating with respect to $\xi$ only gives us factors bounded by a constant times $\langle w\rangle^{k_3}\langle\alpha-\beta\rangle^{k_3}$ for some $k_3>0$.

This all leads to $|\partial_{u,\xi}^{\delta}(\mathcal{W}_t^A)^{-1}\mathcal{G}_{\tilde{\alpha}}^A\otimes\overline{\mathcal{G}_{\tilde{\beta}}^A}(u,\xi)|$, $\delta\in\N_0^{2d}$, being bounded by a finite sum with terms of the following kind:
\begin{equation*}
    \begin{aligned}
        C_1&1_{t\alpha+(1-t)\beta+(-1,1)^d}(u)\langle\xi-(1-t)\alpha'-t\beta'\rangle^{-k_4}\langle\alpha-\beta\rangle^{k_5}\langle\alpha'-\beta'\rangle^{k_6}\\
        &\times\int_{(-2,2)^d}dv|\partial_{u,w}^{\delta_1}\mathfrak{g}(u+(1-t)w-t\alpha-(1-t)\beta)||\partial_{u,w}^{\delta_2}\mathfrak{g}(u-tw-t\alpha-(1-t)\beta)|\\
        &\leq C_21_{t\alpha+(1-t)\beta+(-1,1)^d}(u)\langle\xi-(1-t)\alpha'-t\beta'\rangle^{-k_4}\langle\alpha-\beta\rangle^{k_5}\langle\alpha'-\beta'\rangle^{k_6}
    \end{aligned}
\end{equation*}
for some $C_1,C_2,k_4,k_5,k_6>0$ and $\delta_1,\delta_2\in\N_0^{2d}$. We now have to combine this with the properties of $(N_{\tilde{\alpha},\tilde{\beta}})_{\tilde{\alpha},\tilde{\beta}\in\Z^{2d}}$. In view of \eqref{eq:matrix} and the above we obtain the estimate:
\begin{equation*}
    \begin{aligned}
        |N_{\tilde{\alpha},\tilde{\beta}}|&|\partial_{u,\xi}^{\delta}(\mathcal{W}_t^A)^{-1}\mathcal{G}_{\tilde{\alpha}}^A\otimes\overline{\mathcal{G}_{\tilde{\beta}}^A}(u,\xi)|\\
        &\leq C_31_{t\alpha+(1-t)\beta+(-1,1)^d}(u)\langle\xi-(1-t)\alpha'-t\beta'\rangle^{-d-1-a}\\
        &\quad\times\langle\alpha-\beta\rangle^{-d-1}\langle\alpha'-\beta'\rangle^{-d-1} M(t\alpha+(1-t)\beta,(1-t)\alpha'+t\beta')^{-1}\\
        &\leq C_4M(u,\xi) 1_{t\alpha+(1-t)\beta+(-1,1)^d}(u)\langle\xi-(1-t)\alpha'-t\beta'\rangle^{-d-1}\\
        &\quad\times\langle\alpha-\beta\rangle^{-d-1}\langle\alpha'-\beta'\rangle^{-d-1}
    \end{aligned}
\end{equation*}
with $C_3,C_4>0$ using \eqref{eq:gen_peetre}. This last estimate shows that $\sum_{\tilde{\alpha},\tilde{\beta}\in\Z^{2d}}N_{\tilde{\alpha},\tilde{\beta}}(\mathcal{W}_t^A)^{-1}\mathcal{G}_{\tilde{\alpha}}^A\otimes\overline{\mathcal{G}_{\tilde{\beta}}^A}$ converges uniformly on compacts to a $C^\infty(\R^{2d})$ function, and furthermore, this limit is in $S_0(M)$.\qed

\section{Change and Composition of Symbols}\label{sec:consequences}

There are several important, yet simple, consequences of Theorem \ref{thm:matrix_rep}, which we present in this section.

First of all, if $T\colon\mathscr{S}(\R^d)\rightarrow \mathscr{S}'(\R^d)$ is continuous and linear, and its matrix $\mathbb{M}(T)$ has elements satisfying \eqref{eq:matrix}, then $T$ is a magnetic pseudo-differential operator with parameter $t$ and symbol in $S_0(M)$. Simply put:
$$\Phi=\sum_{\tilde{\alpha},\tilde{\beta}\in\Z^{2d}}\mathbb{M}_{\tilde{\alpha},\tilde{\beta}}^A(T)(\mathcal{W}_t^A)^{-1}\mathcal{G}_{\tilde{\alpha}}^A\otimes\overline{\mathcal{G}_{\tilde{\beta}}^A}$$
lies in $S_0(M)$, and both $\mathfrak{Op}_t^A(\Phi)$ and $T$ have
$$\mathcal{W}_t^A\Phi=\sum_{\tilde{\alpha},\tilde{\beta}\in\Z^{2d}}\mathbb{M}_{\tilde{\alpha},\tilde{\beta}}^A(T)\mathcal{W}_t^A(\mathcal{W}_t^A)^{-1}\mathcal{G}_{\tilde{\alpha}}^A\otimes\overline{\mathcal{G}_{\tilde{\beta}}^A}=\sum_{\tilde{\alpha},\tilde{\beta}\in\Z^{2d}}\mathbb{M}_{\tilde{\alpha},\tilde{\beta}}^A(T)\mathcal{G}_{\tilde{\alpha}}^A\otimes\overline{\mathcal{G}_{\tilde{\beta}}^A}$$
as a distributional kernel, whence $T=\mathfrak{Op}_t^A(\Phi)$.

Secondly, using the decomposition \eqref{eq:matrix_decomposition}, Theorem \ref{thm:matrix_rep}, and Lemma \ref{convergence_schwartz}, one can prove that $\mathfrak{Op}_t^A(\Phi)$, $t\in[0,1]$ and $\Phi\in\bigcup_MS_0(M)$, is a bounded operator in $\mathscr{S}(\R^d)$, i.e. $\mathfrak{Op}_t^A(\Phi)f\in\mathscr{S}(\R^d)$ for every Schwartz function $f$ and $\mathfrak{Op}_t^A(\Phi)$ is continuous as an operator $\mathscr{S}(\R^d)\rightarrow \mathscr{S}(\R^d)$. The bound in Theorem \ref{thm:matrix_rep} \ref{matrix_rep_i} also shows that $S_0(M)\ni\Phi\mapsto \mathfrak{Op}_t^A(\Phi)\in\mathcal{B}(\mathscr{S}(\R^d))$ is continuous:

\begin{corollary}\label{cor:schwartz_bounded}
    Fix $t\in[0,1]$ and a tempered weight $M$. Then $\mathfrak{Op}_t^A(\Phi)\mathscr{S}(\R^d)\subseteq\mathscr{S}(\R^d)$ for every $\Phi\in S_0(M)$, $\mathfrak{Op}_t^A(\Phi)$ is bounded on $\mathscr{S}(\R^d)$, and $S_0(M)\ni\Phi\mapsto \mathfrak{Op}_t^A(\Phi)\in\mathcal{B}(\mathscr{S}(\R^d))$ is continuous.
\end{corollary}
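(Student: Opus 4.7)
The plan is to use the matrix decomposition \eqref{eq:matrix_decomposition} to write, for any $f\in\mathscr{S}(\R^d)$,
\begin{equation*}
    \mathfrak{Op}_t^A(\Phi)f = \sum_{\tilde{\alpha}\in\Z^{2d}} N_{\tilde{\alpha}}(\Phi,f)\,\mathcal{G}_{\tilde{\alpha}}^A, \qquad N_{\tilde{\alpha}}(\Phi,f) := \sum_{\tilde{\beta}\in\Z^{2d}} \mathbb{M}_{\tilde{\alpha},\tilde{\beta}}^A(\mathfrak{Op}_t^A(\Phi))\,\langle f,\overline{\mathcal{G}_{\tilde{\beta}}^A}\rangle_{\mathscr{S}',\mathscr{S}},
\end{equation*}
and to verify that $(N_{\tilde{\alpha}}(\Phi,f))_{\tilde{\alpha}\in\Z^{2d}}$ satisfies the rapid-decay hypothesis of Lemma~\ref{convergence_schwartz}\ref{convergence_schwartz_ii}, together with a quantitative bound on the supremum in terms of a Schwartz semi-norm of $f$ and a symbol semi-norm of $\Phi$.

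For the estimate on $N_{\tilde{\alpha}}(\Phi,f)$ I would combine two ingredients, each with arbitrarily large decay exponents available: Theorem~\ref{thm:matrix_rep}\ref{matrix_rep_i} applied to $\mathbb{M}_{\tilde{\alpha},\tilde{\beta}}^A(\mathfrak{Op}_t^A(\Phi))$, and Lemma~\ref{convergence_schwartz}\ref{convergence_schwartz_i} applied to $\langle f,\overline{\mathcal{G}_{\tilde{\beta}}^A}\rangle_{\mathscr{S}',\mathscr{S}}$. The only delicate piece is the tempered-weight factor $M(t\alpha+(1-t)\beta,(1-t)\alpha'+t\beta')$ left behind by Theorem~\ref{thm:matrix_rep}\ref{matrix_rep_i}. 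Using \eqref{eq:gen_peetre} together with $\langle x+y\rangle\leq C\langle x\rangle\langle y\rangle$, I would bound this factor by $C\,M(0,0)\langle\alpha\rangle^{a}\langle\alpha'\rangle^{a}\langle\beta\rangle^{a}\langle\beta'\rangle^{a}$ for some $a>0$, and then move the $\alpha$-factors onto the $\beta$-side via Peetre's inequality $\langle\alpha\rangle\leq C\langle\alpha-\beta\rangle\langle\beta\rangle$ (similarly for the primed indices). This way the extra growth is absorbed into the cross-diagonal decay of Theorem~\ref{thm:matrix_rep}\ref{matrix_rep_i} and the rapid $\beta$-decay of the frame coefficients of $f$, leaving a $\tilde{\beta}$-sum that is a convergent lattice sum of $\langle\alpha-\beta\rangle^{-d-1}\langle\beta\rangle^{-d-1}$-type.

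This yields, for every $n,m\in\N_0$, constants $C>0$ and $k,k'\in\N_0$ with
\begin{equation*}
    \sup_{\tilde{\alpha}\in\Z^{2d}}\langle\alpha\rangle^{n}\langle\alpha'\rangle^{m}|N_{\tilde{\alpha}}(\Phi,f)| \leq C\,\Vert\Phi\Vert_{S_0(M),k}\sum_{\gamma\in\N_0^{d},\,|\gamma|\leq k'}\Vert\langle\cdot\rangle^{k'}\partial^{\gamma}f\Vert_{L^{\infty}}.
\end{equation*}
Lemma~\ref{convergence_schwartz}\ref{convergence_schwartz_ii} then immediately gives $\mathfrak{Op}_t^A(\Phi)f\in\mathscr{S}(\R^d)$, and running the argument inside its proof with the explicit bound \eqref{eq:schwartz_est_frame} converts the previous inequality into a bound on each Schwartz semi-norm of $\mathfrak{Op}_t^A(\Phi)f$ of the same shape $C\,\Vert\Phi\Vert_{S_0(M),k}\sum_{|\gamma|\leq k'}\Vert\langle\cdot\rangle^{k'}\partial^{\gamma}f\Vert_{L^{\infty}}$. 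This single inequality simultaneously delivers all three claims of the corollary: $\mathfrak{Op}_t^A(\Phi)$ preserves $\mathscr{S}(\R^d)$, is bounded there, and $S_0(M)\ni\Phi\mapsto\mathfrak{Op}_t^A(\Phi)\in\mathcal{B}(\mathscr{S}(\R^d))$ is continuous for the topology of uniform convergence on bounded subsets.

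The main obstacle is purely bookkeeping: the splitting of the weight factor $M(t\alpha+(1-t)\beta,(1-t)\alpha'+t\beta')$ between $\alpha$-type and $\beta$-type indices. One must commit in advance to which of the four sources of polynomial decay — the two cross-diagonal decays from Theorem~\ref{thm:matrix_rep}\ref{matrix_rep_i} and the two $\beta$-type decays from Lemma~\ref{convergence_schwartz}\ref{convergence_schwartz_i} — will absorb which piece of the $M$-growth, since a careless Peetre bound would swallow all the $\alpha$-decay needed to activate Lemma~\ref{convergence_schwartz}\ref{convergence_schwartz_ii}.
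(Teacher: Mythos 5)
Your proposal is correct and follows essentially the same route the paper takes: the text preceding the corollary explicitly derives it from the decomposition \eqref{eq:matrix_decomposition}, the off-diagonal decay of Theorem~\ref{thm:matrix_rep}\ref{matrix_rep_i}, and Lemma~\ref{convergence_schwartz}, which is exactly your argument. Your handling of the weight factor via $\langle\alpha\rangle\leq C\langle\alpha-\beta\rangle\langle\beta\rangle$ is the right way to fill in the bookkeeping the paper leaves implicit.
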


This is an expected result, which could be obtained directly from definitions.

\subsection{Continuous Change of Quantization}

Thirdly, the criterion in \eqref{eq:matrix} is independent of $t$, so if $(N_{\tilde{\alpha},\tilde{\beta}})_{\tilde{\alpha},\tilde{\beta}\in\Z^{2d}}$ satisfy the criterion for one $t\in\R$, it satisfies it for all. To see this, suppose $t,s\in[0,1]$ and $\tilde{\alpha},\tilde{\beta}\in\Z^{2d}$. Then from \eqref{eq:peetre}
\begin{align*}
    M(s\alpha&+(1-s)\beta,(1-s)\alpha'+s\beta')^{-1}\\
    &=M(t\alpha+(1-t)\beta+(s-t)(\alpha-\beta),(1-t)\alpha'+t\beta'+(s-t)(\alpha'-\beta'))^{-1}\\
    &\leq C\langle s-t\rangle^{2p} M(t\alpha+(1-t)\beta,(1-t)\alpha'+t\beta')^{-1}\langle\alpha-\beta\rangle^a\langle\alpha'-\beta'\rangle^a,
\end{align*}
which proves the assertion by the arbitrariness of $n,m\geq0$ in \eqref{eq:matrix}. This leads to the following:

\begin{corollary}
    Fix $t,s\in[0,1]$ and a tempered weight $M$. The map
    $$\mathfrak{L}_{t,s}\colon S_0(M)\ni\Phi\mapsto\sum_{\tilde{\alpha},\tilde{\beta}\in\Z^{2d}}\mathbb{M}_{\tilde{\alpha},\tilde{\beta}}^A(\mathfrak{Op}_t^A(\Phi))(\mathcal{W}_s^A)^{-1}\mathcal{G}_{\tilde{\alpha}}^A\otimes\overline{\mathcal{G}_{\tilde{\beta}}^A}\in S_0(M)$$
    defines a linear homeomorphism and
    $$\mathfrak{Op}_s^A(\mathfrak{L}_{t,s}\Phi)=\mathfrak{Op}_t^A(\Phi).$$
\end{corollary}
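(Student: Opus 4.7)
My plan is to combine Theorem \ref{thm:matrix_rep} in both directions with the short computation displayed immediately before the statement. Starting from $\Phi\in S_0(M)$, I first invoke Theorem \ref{thm:matrix_rep}\ref{matrix_rep_i} with parameter $t$ to obtain that the matrix $\mathbb{M}^A(\mathfrak{Op}_t^A(\Phi))$ satisfies the estimate \eqref{eq:matrix_specific} for parameter $t$, bounded by an $S_0(M)$-seminorm of $\Phi$. By the Peetre-style computation presented just above the corollary, the same matrix then satisfies the criterion \eqref{eq:matrix} for parameter $s$ as well, with enlarged exponents and a modified constant. Feeding this into Theorem \ref{thm:matrix_rep}\ref{matrix_rep_ii} with parameter $s$ shows that the defining series for $\mathfrak{L}_{t,s}\Phi$ converges uniformly on compacts to an element of $S_0(M)$ whose $S_0(M)$-seminorms are controlled by $S_0(M)$-seminorms of $\Phi$. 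Linearity is immediate from the formula, so this chain already produces a continuous linear map $\mathfrak{L}_{t,s}\colon S_0(M)\to S_0(M)$.

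Next, I would verify the identity $\mathfrak{Op}_s^A(\mathfrak{L}_{t,s}\Phi)=\mathfrak{Op}_t^A(\Phi)$ by checking that the two operators share the same distributional kernel. Applying $\mathcal{W}_s^A$ term by term to the series defining $\mathfrak{L}_{t,s}\Phi$ cancels the factors $(\mathcal{W}_s^A)^{-1}$ and gives
$$\mathcal{W}_s^A(\mathfrak{L}_{t,s}\Phi)=\sum_{\tilde{\alpha},\tilde{\beta}\in\Z^{2d}}\mathbb{M}_{\tilde{\alpha},\tilde{\beta}}^A(\mathfrak{Op}_t^A(\Phi))\,\mathcal{G}_{\tilde{\alpha}}^A\otimes\overline{\mathcal{G}_{\tilde{\beta}}^A},$$
which is exactly the frame expansion of the distributional kernel of $\mathfrak{Op}_t^A(\Phi)$ recorded right after \eqref{eq:matrix_decomposition}. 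This is the opening observation of Section \ref{sec:consequences} applied to $T=\mathfrak{Op}_t^A(\Phi)$.

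Finally, the same construction with $t$ and $s$ exchanged produces a continuous linear map $\mathfrak{L}_{s,t}\colon S_0(M)\to S_0(M)$, and I claim it is a two-sided inverse of $\mathfrak{L}_{t,s}$. From the kernel identity just established, the matrix of $\mathfrak{Op}_s^A(\mathfrak{L}_{t,s}\Phi)$ coincides with that of $\mathfrak{Op}_t^A(\Phi)$, so
$$\mathfrak{L}_{s,t}(\mathfrak{L}_{t,s}\Phi)=\sum_{\tilde{\alpha},\tilde{\beta}\in\Z^{2d}}\mathbb{M}_{\tilde{\alpha},\tilde{\beta}}^A(\mathfrak{Op}_t^A(\Phi))(\mathcal{W}_t^A)^{-1}\mathcal{G}_{\tilde{\alpha}}^A\otimes\overline{\mathcal{G}_{\tilde{\beta}}^A},$$
which equals $\Phi$ by the same opening observation of Section \ref{sec:consequences} (uniqueness of the $t$-symbol follows from $\mathcal{W}_t^A$ being a linear homeomorphism on $\mathscr{S}'(\R^{2d})$). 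The other composition is symmetric, and the continuity of $\mathfrak{L}_{s,t}$ gives a continuous inverse. The only step that genuinely demands attention is the second one, the termwise application of $\mathcal{W}_s^A$ to the series; everything else reduces to bookkeeping once the $t$-independence of \eqref{eq:matrix} has been recorded.
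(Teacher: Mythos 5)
Your proposal is correct and follows essentially the same route as the paper: the paper's (terse) proof likewise combines the $t$-independence of the matrix criterion \eqref{eq:matrix} established just before the corollary with Theorem \ref{thm:matrix_rep} in both directions for continuity, and identifies the operator and the inverse $\mathfrak{L}_{s,t}$ via the kernel/symbol uniqueness observation opening Section \ref{sec:consequences}. You merely spell out the bookkeeping (termwise application of $\mathcal{W}_s^A$ and injectivity of the quantization) that the paper leaves implicit.
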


\begin{proof}
    The corollary follows directly from the above. Concerning the continuity of $\mathfrak{L}_{t,s}$, it follows from first applying Theorem \ref{thm:matrix_rep} \ref{matrix_rep_ii} to a semi-norm of $\mathfrak{L}_{t,s}\Phi$ and then Theorem \ref{thm:matrix_rep} \ref{matrix_rep_i}. It is also easy to see that $\mathfrak{L}_{t,s}^{-1}=\mathfrak{L}_{s,t}$.
\end{proof}

\subsection{Magnetic Moyal Algebra} 
Lastly, multiplying matrix elements as in \cite{CorneanHelfferPurice2024}, one can prove that for two tempered weights $M_1,M_2$, $t,s,r\in[0,1]$, $\Phi\in S_0(M_1)$, and $\Psi\in S_0(M_2)$, then
$$\mathfrak{Op}_t^A(\Phi)\mathfrak{Op}_s^A(\Psi)$$
is a magnetic pseudo-differential operator with parameter $r$ and symbol in $S_0(M_1M_2)$. This is a case of the magnetic Moyal product $\#_B$ \cite{Mantoiu2004,Iftimie2007} (at least when $t=s=r=1/2$).

To actually prove it we just have to combine the previous statements: The product makes sense by Corollary \ref{cor:schwartz_bounded}. Secondly,
\begin{align*}
    \mathbb{M}_{\tilde{\alpha},\tilde{\beta}}^A(\mathfrak{Op}_t^A(\Phi)\mathfrak{Op}_s^A(\Psi))&=\langle \mathcal{G}_{\tilde{\alpha}}^A,\mathfrak{Op}_t^A(\Phi)\mathfrak{Op}_s^A(\Psi)\mathcal{G}_{\tilde{\beta}}^A\rangle_{L^2}=\langle \mathfrak{Op}_t^A(\Phi)^*\mathcal{G}_{\tilde{\alpha}}^A,\mathfrak{Op}_s^A(\Psi)\mathcal{G}_{\tilde{\beta}}^A\rangle_{L^2}\\
    &=\sum_{\tilde{\gamma}\in\Z^{2d}}\langle \mathfrak{Op}_t^A(\Phi)^*\mathcal{G}_{\tilde{\alpha}}^A,\mathcal{G}_{\tilde{\gamma}}^A\rangle_{L^2}\langle\mathcal{G}_{\tilde{\gamma}}^A,\mathfrak{Op}_s^A(\Psi)\mathcal{G}_{\tilde{\beta}}^A\rangle_{L^2}\\
    &=\sum_{\tilde{\gamma}\in\Z^{2d}}\mathbb{M}_{\tilde{\alpha},\tilde{\gamma}}^A(\mathfrak{Op}_t^A(\Phi))\mathbb{M}_{\tilde{\gamma},\tilde{\beta}}^A(\mathfrak{Op}_s^A(\Psi)).
\end{align*}
Using this sum, \eqref{eq:peetre}, and \eqref{eq:matrix}, one can conclude that $\mathbb{M}^A(\mathfrak{Op}_t^A(\Phi)\mathfrak{Op}_s^A(\Psi))$ satisfies \eqref{eq:matrix} with $M:= M_1M_2$. Furthermore, the bound \eqref{eq:matrix_specific} shows that any semi-norm of $S_0(M)$ on the new symbol would be bounded by a constant times the product of a semi-norm of $\Phi$ and one of $\Psi$. This shows:

\begin{corollary}\label{symbol_cal}
    Fix $t,s,r\in[0,1]$ and two tempered weights $M_1,M_2$. The bilinear map
    \begin{align*}
        \mathfrak{B}_{t,s,r}\colon S_0(M_1)&\times S_0(M_2)\ni(\Phi,\Psi)\\
        &\mapsto \sum_{\tilde{\alpha},\tilde{\beta}\in\Z^{2d}}\mathbb{M}_{\tilde{\alpha},\tilde{\beta}}^A(\mathfrak{Op}_t^A(\Phi)\mathfrak{Op}_s^A(\Psi))(\mathcal{W}_r^A)^{-1}\mathcal{G}_{\tilde{\alpha}}^A\otimes\overline{\mathcal{G}_{\tilde{\beta}}^A}\in S_0(M_1M_2)
    \end{align*}
    is continuous and
    $$\mathfrak{Op}_r^A(\mathfrak{B}_{t,s,r}(\Phi,\Psi))=\mathfrak{Op}_t^A(\Phi)\mathfrak{Op}_s^A(\Psi).$$
\end{corollary}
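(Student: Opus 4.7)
The plan is to make rigorous the computation already sketched in the paragraph preceding the corollary. First, by Corollary \ref{cor:schwartz_bounded} both $\mathfrak{Op}_t^A(\Phi)$ and $\mathfrak{Op}_s^A(\Psi)$ send $\mathscr{S}(\R^d)$ continuously into itself, so the composition $\mathfrak{Op}_t^A(\Phi)\mathfrak{Op}_s^A(\Psi)$ is a continuous linear map $\mathscr{S}(\R^d)\to\mathscr{S}'(\R^d)$ and admits the matrix representation \eqref{eq:matrix_decomposition}. Inserting the Parseval resolution \eqref{eq:parseval} of the identity between the two factors (legal in view of the absolute summability provided by Theorem \ref{thm:matrix_rep} \ref{matrix_rep_i}) yields the matrix product identity already displayed in the excerpt.

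The central step is to verify that the matrix $\mathbb{M}^A(\mathfrak{Op}_t^A(\Phi)\mathfrak{Op}_s^A(\Psi))$ obeys \eqref{eq:matrix} with weight $M_1M_2$ at parameter $r$, with the implicit constant controlled by one semi-norm of $\Phi$ times one of $\Psi$. Applying Theorem \ref{thm:matrix_rep} \ref{matrix_rep_i} to each factor for arbitrarily large $n,m\in\N_0$, each summand is dominated by $\Vert\Phi\Vert_{S_0(M_1),k}\Vert\Psi\Vert_{S_0(M_2),k}$ times
\begin{equation*}
\langle\alpha-\gamma\rangle^{-n}\langle\alpha'-\gamma'\rangle^{-m}\langle\gamma-\beta\rangle^{-n}\langle\gamma'-\beta'\rangle^{-m}\,M_1(p_1)\,M_2(p_2),
\end{equation*}
where $p_1=(t\alpha+(1-t)\gamma,(1-t)\alpha'+t\gamma')$ and $p_2=(s\gamma+(1-s)\beta,(1-s)\gamma'+s\beta')$. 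Writing $q=(r\alpha+(1-r)\beta,(1-r)\alpha'+r\beta')$, the elementary identity $t\alpha+(1-t)\gamma-r\alpha-(1-r)\beta=(t-r)(\alpha-\beta)+(1-t)(\gamma-\beta)$ (and its obvious analogues) shows that each component of $p_j-q$ is linearly controlled by the lattice differences $\alpha-\gamma,\gamma-\beta$ (and primed versions). The generalized Peetre inequality \eqref{eq:gen_peetre} then converts $M_1(p_1)M_2(p_2)$ into $(M_1M_2)(q)$ at the cost of polynomial factors $\langle\alpha-\gamma\rangle^a\langle\gamma-\beta\rangle^a$ and their primed analogues; choosing $n,m$ sufficiently large from the outset absorbs these losses entirely.

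Summing over $\tilde{\gamma}$ is then a standard lattice-convolution estimate: for exponents exceeding $d$ one has $\sum_{\gamma\in\Z^d}\langle\alpha-\gamma\rangle^{-n-d-1}\langle\gamma-\beta\rangle^{-n-d-1}\lesssim\langle\alpha-\beta\rangle^{-n}$, and likewise in the primed variable. This produces the required bound on $\mathbb{M}^A(\mathfrak{Op}_t^A(\Phi)\mathfrak{Op}_s^A(\Psi))$, so Theorem \ref{thm:matrix_rep} \ref{matrix_rep_ii} reconstructs a symbol $\mathfrak{B}_{t,s,r}(\Phi,\Psi)\in S_0(M_1M_2)$ and supplies the semi-norm estimate that gives continuity of the bilinear map. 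Finally, the observation at the start of Section \ref{sec:consequences}, that any continuous $\mathscr{S}\to\mathscr{S}'$ whose matrix satisfies \eqref{eq:matrix} is the $r$-quantization of its reconstructed symbol, identifies $\mathfrak{Op}_r^A(\mathfrak{B}_{t,s,r}(\Phi,\Psi))$ with $\mathfrak{Op}_t^A(\Phi)\mathfrak{Op}_s^A(\Psi)$. The main technical obstacle is the Peetre bookkeeping in the second step: verifying that the discrepancies $p_j-q$ are genuinely controlled by the same lattice differences that provide the off-diagonal decay, so that the polynomial penalties incurred when shifting the weights from $p_1,p_2$ to the common target $q$ can always be absorbed by taking $n,m$ large. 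Once this balancing is settled, the remaining pieces are routine applications of the two halves of Theorem \ref{thm:matrix_rep}.
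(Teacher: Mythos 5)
Your proposal is correct and follows the same route as the paper: the matrix-product identity obtained by inserting the Parseval frame between the two factors, the generalized Peetre inequality to move $M_1(p_1)M_2(p_2)$ to $(M_1M_2)(q)$ at a polynomial cost absorbed by the arbitrary off-diagonal decay, the lattice convolution bound over $\tilde{\gamma}$, and then both halves of Theorem \ref{thm:matrix_rep} to reconstruct the symbol and obtain continuity. You in fact supply more of the Peetre bookkeeping than the paper does, which only sketches this step in the paragraph preceding the corollary.
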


\begin{remark}\label{adjoint}
    To justify the use of the adjoint in the above calculation we note that for $t\in[0,1],\Phi\in S_0(M)$, $M$ a tempered weight, then the domain of the adjoint $\mathfrak{Op}_t^A(\Phi)^*$ includes the space of Schwartz functions and for $f\in\mathscr{S}(\R^d)$ we explicitly have:
    $$\mathfrak{Op}_t^A(\Phi)^*f=\sum_{\tilde{\alpha},\tilde{\beta}\in\Z^{2d}}\overline{\mathbb{M}_{\tilde{\beta},\tilde{\alpha}}^A(\mathfrak{Op}_t^A(\Phi))}\langle \mathcal{G}_{\tilde{\beta}}^A,f\rangle_{L^2}\mathcal{G}_{\tilde{\alpha}}^A$$
    This also implies that $\mathfrak{Op}_t^A(\Phi)^*$ is the extension of a magnetic pseudo-differential operator with symbol in $S_0(M)$, and furthermore that there exists a linear homeomorphism $\mathfrak{L}_{t,s}$, $t,s\in[0,1]$, of $S_0(M)$ such that
    $$\langle\mathfrak{Op}_s^A(\mathfrak{L}_{t,s}\Phi)g,f\rangle_{L^2}=\langle g,\mathfrak{Op}_t^A(\Phi)f\rangle_{L^2}$$
    for all $f,g\in\mathscr{S}(\R^d)$.
\end{remark}

\section{Boundedness, Compactness, and Schatten-class Properties}\label{sec:schatten}

It was proven in \cite{CorneanHelfferPurice2024} (and earlier in \cite{Iftimie2007}) that a magnetic version of the Calderón-Vaillancourt theorem holds for magnetic pseudo-differential operators. Stated in the language of this paper:

\begin{theorem}\label{thm:bounded} \cite[Theorem 3.1]{Iftimie2007}\cite[Theorem 3.7]{CorneanHelfferPurice2024}
    When the tempered weight $M$ is bounded, then $\mathfrak{Op}_t^A(\Phi)$ is a bounded operator in $L^2(\R^d)$ for all $\Phi\in S_0(M)$ and $t\in[0,1]$, and the map $S_0(M)\ni\Phi\mapsto\mathfrak{Op}_t^A(\Phi)\in \mathcal{B}(L^2(\R^d))$ is continuous.
\end{theorem}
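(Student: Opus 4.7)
The plan is to read off the $L^2$ boundedness from the matrix estimate in Theorem \ref{thm:matrix_rep} \ref{matrix_rep_i} combined with the Parseval frame property of Lemma \ref{convergence_l2}. Concretely, since $(\mathcal{G}_{\tilde{\alpha}}^A)_{\tilde{\alpha}\in\Z^{2d}}$ is a Parseval frame, the analysis map $C\colon L^2(\R^d)\to\ell^2(\Z^{2d})$, $f\mapsto(\langle\mathcal{G}_{\tilde{\alpha}}^A,f\rangle_{L^2})_{\tilde{\alpha}}$, and the synthesis map $D\colon\ell^2(\Z^{2d})\to L^2(\R^d)$, $(c_{\tilde{\alpha}})\mapsto\sum_{\tilde{\alpha}}c_{\tilde{\alpha}}\mathcal{G}_{\tilde{\alpha}}^A$, are both bounded with norm $1$. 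By \eqref{eq:matrix_decomposition}, on the dense subspace $\mathscr{S}(\R^d)$ we have $\mathfrak{Op}_t^A(\Phi)=D\circ\mathbb{M}^A(\mathfrak{Op}_t^A(\Phi))\circ C$, so it suffices to show that the matrix $\mathbb{M}^A(\mathfrak{Op}_t^A(\Phi))$ acts boundedly on $\ell^2(\Z^{2d})$ with operator norm controlled by a semi-norm of $\Phi$.

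To bound the matrix on $\ell^2$, I would apply Schur's test. Since $M$ is bounded, say $M\leq C_M$ pointwise, Theorem \ref{thm:matrix_rep} \ref{matrix_rep_i} applied with any chosen $n,m$ gives
\begin{equation*}
    \left|\mathbb{M}_{\tilde{\alpha},\tilde{\beta}}^A(\mathfrak{Op}_t^A(\Phi))\right|\leq C_M\cdot C\Vert\Phi\Vert_{S_0(M),k}\langle\alpha-\beta\rangle^{-n}\langle\alpha'-\beta'\rangle^{-m}
\end{equation*}
for some $k=k(n,m)$. Choosing $n=m=d+1$, both row and column sums are bounded uniformly in $\tilde{\alpha}$ (resp.\ $\tilde{\beta}$) by a constant multiple of $\Vert\Phi\Vert_{S_0(M),k}$, since $\sum_{\beta\in\Z^d}\langle\alpha-\beta\rangle^{-(d+1)}$ is finite and independent of $\alpha$. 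Schur's test then yields an $\ell^2$-operator norm bound of the form $C'\Vert\Phi\Vert_{S_0(M),k}$.

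Composing with $C$ and $D$ gives $\Vert\mathfrak{Op}_t^A(\Phi)f\Vert_{L^2}\leq C'\Vert\Phi\Vert_{S_0(M),k}\Vert f\Vert_{L^2}$ for every $f\in\mathscr{S}(\R^d)$; by density this identifies $\mathfrak{Op}_t^A(\Phi)$ with its unique bounded extension to $L^2(\R^d)$, and the same inequality expresses continuity of $S_0(M)\ni\Phi\mapsto\mathfrak{Op}_t^A(\Phi)\in\mathcal{B}(L^2(\R^d))$.

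There is no real obstacle here beyond recognizing that the boundedness of $M$ is precisely what removes the only $\tilde{\alpha},\tilde{\beta}$-dependent factor in \eqref{eq:matrix_specific} that is not integrable across the lattice; the decay in $\alpha-\beta$ and $\alpha'-\beta'$ is already at our disposal in any polynomial order. The step that requires a bit of care is the passage from the representation \eqref{eq:matrix_decomposition} (which a priori holds in $\mathscr{S}'(\R^d)$) to a genuine bounded operator identity on $L^2(\R^d)$, but this is handled by density of $\mathscr{S}(\R^d)$ in $L^2(\R^d)$ together with the fact that $\langle f,\overline{\mathcal{G}_{\tilde{\beta}}^A}\rangle_{\mathscr{S}',\mathscr{S}}=\langle\mathcal{G}_{\tilde{\beta}}^A,f\rangle_{L^2}$ for $f\in L^2(\R^d)$.
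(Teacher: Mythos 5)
Your argument is correct and is essentially the paper's proof: both rest on the matrix estimate of Theorem \ref{thm:matrix_rep} \ref{matrix_rep_i} (where boundedness of $M$ leaves only the off-diagonal decay), the Schur test on $\ell^2(\Z^{2d})$, and the Parseval frame property of Lemma \ref{convergence_l2}. The only cosmetic difference is that you apply Schur directly to $\mathbb{M}^A(\mathfrak{Op}_t^A(\Phi))$ after factoring through the analysis and synthesis maps (the same decomposition the paper uses in the proof of Theorem \ref{thm:compact}), whereas the paper's proof of Theorem \ref{thm:bounded} applies it to the product matrix arising from $\Vert\mathfrak{Op}_t^A(\Phi)f\Vert_{L^2}^2=\langle\mathfrak{Op}_t^A(\Phi)^*\mathfrak{Op}_t^A(\Phi)f,f\rangle_{L^2}$.
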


\begin{proof}
    For any $f\in\mathscr{S}(\R^d)$ we have the expansion:
    \begin{equation*}
        \begin{aligned}
            \Vert \mathfrak{Op}_t^A(\Phi)f\Vert_{L^2}^2&=\langle \mathfrak{Op}_t^A(\Phi)^*\mathfrak{Op}_t^A(\Phi)f,f\rangle_{L^2}=\sum_{\tilde{\alpha}\in\Z^{2d}}\langle \mathfrak{Op}_t^A(\Phi)f,\mathfrak{Op}_t^A(\Phi)\mathcal{G}_{\tilde{\alpha}}^A \rangle_{L^2}\langle \mathcal{G}_{\tilde{\alpha}}^A,f \rangle_{L^2}\\
            &=\sum_{\tilde{\alpha},\tilde{\gamma}\in\Z^{2d}}\langle f,\mathfrak{Op}_t^A(\Phi)^*\mathcal{G}_{\tilde{\gamma}}^A \rangle_{L^2}\langle \mathcal{G}_{\tilde{\gamma}}^A,\mathfrak{Op}_t^A(\Phi)\mathcal{G}_{\tilde{\alpha}}^A \rangle_{L^2}\langle \mathcal{G}_{\tilde{\alpha}}^A,f \rangle_{L^2}\\
            &=\sum_{\tilde{\alpha},\tilde{\beta}\in\Z^{2d}}\langle f,\mathcal{G}_{\tilde{\beta}}^A \rangle_{L^2}\langle \mathcal{G}_{\tilde{\alpha}}^A,f \rangle_{L^2}\sum_{\tilde{\gamma}\in\Z^{2d}}\overline{\mathbb{M}_{\tilde{\beta},\tilde{\gamma}}^A(\mathfrak{Op}_t^A(\Phi))}\mathbb{M}_{\tilde{\gamma},\tilde{\alpha}}^A(\mathfrak{Op}_t^A(\Phi))
        \end{aligned}
    \end{equation*}
    Using \eqref{eq:matrix_specific} and the Schur test \cite[Theorem 1.8]{Hedenmalm2000} on $\ell^2(\Z^{2d})$ we see that
    \begin{align*}
            \ell^2(\Z^{2d})&\ni(c_{\tilde{\beta}})_{\tilde{\beta}\in\Z^{2d}}\\
            &\mapsto\left(\sum_{\tilde{\beta}\in\Z^{2d}}\left(\sum_{\tilde{\gamma}\in\Z^{2d}}\overline{\mathbb{M}_{\tilde{\beta},\tilde{\gamma}}^A(\mathfrak{Op}_t^A(\Phi))}\mathbb{M}_{\tilde{\gamma},\tilde{\alpha}}^A(\mathfrak{Op}_t^A(\Phi))\right)c_{\tilde{\beta}}\right)_{\tilde{\alpha}\in\Z^{2d}}\in\ell^2(\Z^{2d})
    \end{align*}
    is a well-defined bounded operator with norm less than $C\Vert\Phi\Vert_{S_0(M),k}^2$ for some $C>0,k\in\N_0$. In combination with Lemma \ref{convergence_l2} one finds that
    $$\Vert \mathfrak{Op}_t^A(\Phi)f\Vert_{L^2}^2\leq C\Vert\Phi\Vert_{S_0(M),k}^2\Vert(\langle \mathcal{G}_{\tilde{\alpha}}^A,f \rangle_{L^2})_{\tilde{\alpha}\in\Z^{2d}}\Vert_{\ell^2}^2=C\Vert\Phi\Vert_{S_0(M),k}^2\Vert f\Vert_{L^2}^2,$$
    which shows that $\mathfrak{Op}_t^A(\Phi)$ is bounded and the continuity assertion.
\end{proof}

We prove similar criteria for compactness and certain Schatten-class properties.

\subsection{Compactness} As noted in some sources (\cite[Theorem 1.4.2]{Nicola2010} and \cite[Theorem 4.28]{Zworski2012}), when the tempered weight $M$ goes to zero towards infinity, i.e. $\sup_{\Vert(x,\xi)\Vert>R}M(x,\xi)\rightarrow0$ as $R\rightarrow\infty$, then pseudo-differential operators are compact. This is also the case in the magnetic setting (see also \cite[Theorem 1.1]{Iftimie2010a}):

\begin{theorem}\label{thm:compact}
    Fix a tempered weight $M$ which decays to zero at infinity. Then $\mathfrak{Op}_t^A(\Phi)$ is a compact operator in $L^2(\R^d)$ for every $t\in[0,1]$ and $\Phi\in S_0(M)$.
\end{theorem}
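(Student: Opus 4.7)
The plan is a standard density argument: approximate $\mathfrak{Op}_t^A(\Phi)$ in operator norm by Hilbert--Schmidt operators, and invoke the fact that the compact operators form a norm-closed subspace of $\mathcal{B}(L^2(\R^d))$.

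Pick a cutoff $\chi\in C_c^\infty(\R^{2d})$ with $\chi\equiv 1$ in a neighborhood of the origin, set $\chi_R(x,\xi):=\chi((x,\xi)/R)$, and define $\Phi_R:=\chi_R\Phi$. Since $\Phi_R$ has compact support in $(x,\xi)$, the change of variables $u=tx+(1-t)y,\ v=x-y$ (Jacobian $\pm 1$) applied to \eqref{eq:defn_magdiff} shows that, up to the unimodular phase $e^{i\varphi(x,y)}$, the distributional kernel $\mathcal{W}_t^A\Phi_R$ equals the partial inverse Fourier transform of $\Phi_R$ in $\xi$. This transform has compact support in the $u$-variable and is Schwartz in $v$; applying Plancherel in $v$ and integrating over the compact $u$-support yields $\mathcal{W}_t^A\Phi_R\in L^2(\R^{2d})$, so $\mathfrak{Op}_t^A(\Phi_R)$ is Hilbert--Schmidt and in particular compact.

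It then remains to verify $\mathfrak{Op}_t^A((1-\chi_R)\Phi)\to 0$ in $\mathcal{B}(L^2(\R^d))$ as $R\to\infty$. Since the constant $1$ is a bounded tempered weight, Theorem \ref{thm:bounded} applied with this weight gives
$$\Vert\mathfrak{Op}_t^A((1-\chi_R)\Phi)\Vert_{\mathcal{B}(L^2)}\leq C\Vert(1-\chi_R)\Phi\Vert_{S_0(1),k}$$
for some $k\in\N_0$, and it suffices to show the right-hand side vanishes. Expanding $\partial^\gamma[(1-\chi_R)\Phi]$ via Leibniz, every term involves either $(1-\chi_R)$ itself or some $\partial^\beta\chi_R$ with $|\beta|\geq 1$; in both cases the factor is supported inside $\{\Vert(x,\xi)\Vert\geq R\}$ and uniformly bounded (derivatives of $\chi_R$ carry harmless powers $R^{-|\beta|}\leq 1$ for $R\geq 1$), while the remaining factor is bounded pointwise by $\Vert\Phi\Vert_{S_0(M),|\gamma|}M(x,\xi)$. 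Hence for each $n\in\N_0$,
$$\Vert(1-\chi_R)\Phi\Vert_{S_0(1),n}\leq C_n\Vert\Phi\Vert_{S_0(M),n}\sup_{\Vert(x,\xi)\Vert\geq R}M(x,\xi),$$
which tends to zero by the decay hypothesis on $M$.

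Combining both steps realizes $\mathfrak{Op}_t^A(\Phi)$ as a norm limit of compact operators, hence compact. The only delicate point is the Hilbert--Schmidt estimate for $\mathfrak{Op}_t^A(\Phi_R)$, but because the magnetic phase is unimodular it cancels in $|\mathcal{W}_t^A\Phi_R|^2$ and the $L^2$-integrability of the kernel reduces to the standard non-magnetic computation.
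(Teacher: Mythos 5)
Your proof is correct, but it takes a genuinely different route from the paper's. You use the classical cutoff-and-approximate argument: split $\Phi=\chi_R\Phi+(1-\chi_R)\Phi$, note that $\chi_R\Phi\in\mathscr{S}(\R^{2d})$ so its kernel $\mathcal{W}_t^A(\chi_R\Phi)$ lies in $L^2(\R^{2d})$ (the unimodular magnetic phase and the unimodular change of variables are irrelevant for the Hilbert--Schmidt norm), and control the tail in operator norm via Theorem \ref{thm:bounded} applied with the bounded weight $1$, using that every Leibniz term of $\partial^\gamma[(1-\chi_R)\Phi]$ is supported in $\{\Vert(x,\xi)\Vert\geq cR\}$ (for $\beta\neq0$ the support of $\partial^\beta\chi_R$ is an annulus of inner radius comparable to $R$) and bounded by a constant times $M$, so that $\Vert(1-\chi_R)\Phi\Vert_{S_0(1),n}\leq C_n\Vert\Phi\Vert_{S_0(M),n}\sup_{\Vert(x,\xi)\Vert\geq cR}M(x,\xi)\to0$. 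All of these steps check out. The paper instead stays entirely inside the frame formalism: it factors $\mathfrak{Op}_t^A(\Phi)$ through $\ell^2(\Z^{2d})$ as analysis operator, matrix multiplication, and synthesis operator, and then splits the matrix as a bounded Schur-test operator composed with the diagonal multiplication by $(M(\tilde{\alpha}))_{\tilde{\alpha}}$, which is compact because $M(\tilde{\alpha})\to0$. Your argument is more elementary and independent of the matrix representation (it only needs Theorem \ref{thm:bounded} as a black box, plus the fact that $\mathcal{W}_t^A$ preserves $\mathscr{S}(\R^{2d})$); the paper's argument buys a quantitative handle on the singular values through the matrix, which is exactly the machinery reused for the Schatten-class result in Theorem \ref{thm:schatten}, where norm approximation by compact operators alone would not suffice.
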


\begin{proof}
    The proof is actually quite simple: Note first that $\mathfrak{Op}_t^A(\Phi)$ is a bounded operator by the magnetic Calderón-Vaillancourt theorem \ref{thm:bounded}, and secondly that for every $f\in L^2(\R^d)$ we may write (Lemma \ref{convergence_l2} and $L^2(\R^d)$ version of \eqref{eq:matrix_decomposition}):
    \begin{align*}
        \mathfrak{Op}_t^A(\Phi)f&=\sum_{\tilde{\alpha},\tilde{\beta}\in\Z^{2d}}\mathbb{M}_{\tilde{\alpha},\tilde{\beta}}^A(\mathfrak{Op}_t^A(\Phi))\langle \mathcal{G}_{\tilde{\beta}}^A,f\rangle_{L^2}\mathcal{G}_{\tilde{\alpha}}^A\\
        &=\sum_{\tilde{\alpha}\in\Z^{2d}}\mathcal{G}_{\tilde{\alpha}}^A\left(\sum_{\tilde{\beta}\in\Z^{2d}}\mathbb{M}_{\tilde{\alpha},\tilde{\beta}}^A(\mathfrak{Op}_t^A(\Phi))\langle \mathcal{G}_{\tilde{\beta}}^A,f\rangle_{L^2}\right)
    \end{align*}
    So $\mathfrak{Op}_t^A(\Phi)$ can be written as the composition of three maps:
    \begin{equation}\label{eq:three_map_decomposition}
        \begin{aligned}
            L^2(\R^d)\ni f&\mapsto (\langle \mathcal{G}_{\tilde{\beta}}^A,f\rangle_{L^2})_{\tilde{\beta}\in\Z^{2d}}\in\ell^2(\Z^{2d})\\
            \ell^2(\Z^{2d})\ni(c_{\tilde{\beta}})_{\tilde{\beta}\in\Z^{2d}}&\mapsto\left(\sum_{\tilde{\beta}\in\Z^{2d}}\mathbb{M}_{\tilde{\alpha},\tilde{\beta}}^A(\mathfrak{Op}_t^A(\Phi))c_{\tilde{\beta}}\right)_{\tilde{\alpha}\in\Z^{2d}}\in\ell^2(\Z^{2d})\\
            \ell^2(\Z^{2d})\ni(c_{\tilde{\beta}})_{\tilde{\beta}\in\Z^{2d}}&\mapsto\sum_{\tilde{\alpha}\in\Z^{2d}}c_{\tilde{\alpha}}\mathcal{G}_{\tilde{\alpha}}^A\in L^2(\R^d)
        \end{aligned}
    \end{equation}
    The first and third are bounded \cite[Lemma 3.2.1 and Theorem 3.2.3]{Christensen2016}. We will prove that the second is compact. 
    
    From \eqref{eq:peetre} we get
    $$M(\tilde{\alpha})^{-1}\leq CM(t\alpha+(1-t)\beta,(1-t)\alpha'+t\beta')^{-1}\langle \tilde{\alpha}-\tilde{\beta}\rangle^a$$
    for some $a,C>0$. Using this with Theorem \ref{thm:matrix_rep} we have
    \begin{align*}
        \sup_{\tilde{\alpha},\tilde{\beta}\in\Z^{2d}}M(\tilde{\alpha})^{-1}\langle\alpha-\beta\rangle^n\langle\alpha'-\beta'\rangle^m\left|\mathbb{M}_{\tilde{\alpha},\tilde{\beta}}^A\left(\mathfrak{Op}_t^A(\Phi)\right)\right|<\infty
    \end{align*}
    for all $n,m\in\N_0$. Thus the second map in \eqref{eq:three_map_decomposition} can be written as a composition of the maps:
    \begin{equation*}
        \begin{aligned}
            \ell^2(\Z^{2d})\ni(c_{\tilde{\beta}})_{\tilde{\beta}\in\Z^{2d}}&\mapsto\left(\sum_{\tilde{\beta}\in\Z^{2d}}M^{-1}(\tilde{\alpha})\mathbb{M}_{\tilde{\alpha},\tilde{\beta}}^A(\mathfrak{Op}_t^A(\Phi))c_{\tilde{\beta}}\right)_{\tilde{\alpha}\in\Z^{2d}}\in\ell^2(\Z^{2d})\\
            \ell^2(\Z^{2d})\ni(c_{\tilde{\alpha}})_{\tilde{\alpha}\in\Z^{2d}}&\mapsto(M(\tilde{\alpha})c_{\tilde{\alpha}})_{\tilde{\alpha}\in\Z^{2d}}\in\ell^2(\Z^{2d})
        \end{aligned}
    \end{equation*}
    The first is bounded (use the Schur test as in the proof of Theorem \ref{thm:bounded}) and the second is compact since it is multiplication by a sequence decaying to zero at infinity.
\end{proof}

\subsection{Schatten-class Properties} As a last point of the paper we want to establish a connection between $M$ and the $p$-Schatten-class properties of the operator $\mathfrak{Op}_t^A(\Phi)$. This has already been done to some extent, see \cite[Theorem 1.2 and Corollary 1.5]{Athmouni2018}. 

The paper \cite{Zhu2015} establishes an interesting connection between Schatten-classes and frames, and we want to utilize that for our matrix representation. Specifically, \cite[Theorem B]{Zhu2015} states that for a bounded operator $T$ on some Hilbert space $H$, $T$ being of Schatten class $\mathcal{B}_p(H)$ with parameter $p$, $p\in(0,2]$, is equivalent with
$$\sum_n\Vert Tf_n\Vert_H^p<\infty$$
for some frame $(f_n)_n$ in $H$. Also 
$$\Vert T\Vert_{\mathcal{B}_p}=\inf\left(\sum_n\Vert Tf_n\Vert_H^p\right)^\frac{1}{p}$$
where the infimum is taken over all Parseval frames.

This leads to:

\begin{theorem}\label{thm:schatten}
    Fix a tempered weight $M$ which is also a member of $L^p(\R^{2d})$ with $p\in(0,\infty)$. Then $\mathfrak{Op}_t^A(\Phi)$ is of Schatten-class $\mathcal{B}_p(L^2(\R^d))$ for every $t\in\R$ and $\Phi\in S_0(M)$, and the map $S_0(M)\ni\Phi\rightarrow\mathfrak{Op}_t^A(\Phi)\in \mathcal{B}_p(L^2(\R^d))$ is continuous.
\end{theorem}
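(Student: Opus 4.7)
The plan is to split the range of $p$ at $p=2$: for $p\in(0,2]$ I will apply the frame characterization of Schatten classes from \cite{Zhu2015} directly to the tight Gabor frame of Lemma \ref{convergence_l2}, using the matrix bound of Theorem \ref{thm:matrix_rep} \ref{matrix_rep_i}; for $p>2$ I will bootstrap using the standard operator identity $\|T\|_{\mathcal{B}_p}^2=\|T^*T\|_{\mathcal{B}_{p/2}}$ together with the magnetic Moyal calculus of Corollary \ref{symbol_cal} and the adjoint description of Remark \ref{adjoint}.

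Setting $T:=\mathfrak{Op}_t^A(\Phi)$, the base case $p\in(0,2]$ proceeds as follows. Since $(\mathcal{G}_{\tilde{\beta}}^A)_{\tilde{\beta}}$ is Parseval, \cite[Theorem B]{Zhu2015} gives $\|T\|_{\mathcal{B}_p}^p\leq\sum_{\tilde{\beta}}\|T\mathcal{G}_{\tilde{\beta}}^A\|_{L^2}^p$, and a second expansion in the frame yields $\|T\mathcal{G}_{\tilde{\beta}}^A\|_{L^2}^2=\sum_{\tilde{\gamma}}|\mathbb{M}_{\tilde{\gamma},\tilde{\beta}}^A(T)|^2$. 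Inserting Theorem \ref{thm:matrix_rep} \ref{matrix_rep_i} with $n,m$ taken large enough and using \eqref{eq:gen_peetre} to absorb the midpoint weight $M(t\gamma+(1-t)\beta,(1-t)\gamma'+t\beta')$ into $M(\tilde{\beta})$ at the cost of factors $\langle\gamma-\beta\rangle^a\langle\gamma'-\beta'\rangle^a$, I obtain
\[\|T\mathcal{G}_{\tilde{\beta}}^A\|_{L^2}\leq C\|\Phi\|_{S_0(M),k}M(\tilde{\beta})\]
for suitable $C>0$ and $k\in\N_0$. A further application of the tempered-weight property shows that $M$ varies by at most a fixed factor on translates of $[0,1)^{2d}$, so $\sum_{\tilde{\beta}}M(\tilde{\beta})^p\leq C'\|M\|_{L^p}^p<\infty$. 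Combining these estimates gives both the Schatten inclusion and the continuity statement on $(0,2]$.

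For $p\in(2^k,2^{k+1}]$ with $k\geq 1$ I induct on $k$, with the base case covering all of $(0,2]$. Remark \ref{adjoint} provides a symbol in $S_0(M)$ for the extension of the adjoint $T^*$, and Corollary \ref{symbol_cal} then produces some $\Psi\in S_0(M^2)$ with $T^*T=\mathfrak{Op}_r^A(\Psi)$ whose semi-norms are controlled by products of semi-norms of $\Phi$. Since $M^2$ is again a tempered weight and lies in $L^{p/2}$ with $p/2\in(2^{k-1},2^k]$, the induction hypothesis yields $T^*T\in\mathcal{B}_{p/2}(L^2(\R^d))$ with norm continuously depending on $\Phi$. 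The identity $\|T\|_{\mathcal{B}_p}=\|T^*T\|_{\mathcal{B}_{p/2}}^{1/2}$ then transports the conclusion back to $T$ and closes the induction.

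The main technical step is the base case: passing from the double summability of matrix elements furnished by Theorem \ref{thm:matrix_rep} to summability over a single frame index, and comparing the lattice sum $\sum_{\tilde{\beta}}M(\tilde{\beta})^p$ to the integral $\|M\|_{L^p}^p$. Once this is in hand, the range $p>2$ is essentially bookkeeping, resting entirely on the magnetic Moyal calculus already established in Corollary \ref{symbol_cal}.
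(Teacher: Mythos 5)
Your proposal is correct and follows essentially the same route as the paper: Zhu's frame criterion applied to the tight Gabor frame plus the matrix bound of Theorem \ref{thm:matrix_rep} and the lattice-versus-integral comparison of $M^p$ for $p\in(0,2]$, then bootstrapping via $\Vert T\Vert_{\mathcal{B}_p}=\Vert T^*T\Vert_{\mathcal{B}_{p/2}}^{1/2}$ together with Corollary \ref{symbol_cal} and Remark \ref{adjoint} for $p>2$. The only differences are organizational (you sum the matrix column in $\ell^2$ first instead of pulling the power $p/2$ inside the double sum, and you induct by halving $p$ rather than passing directly to $(\mathfrak{Op}_t^A(\Phi)^*\mathfrak{Op}_t^A(\Phi))^{2^{k-1}}$ with symbol in $S_0(M^{2^k})$), and you should still record, as the paper does, that $M\in L^p$ tempered forces $M$ bounded and decaying, so Theorems \ref{thm:bounded} and \ref{thm:compact} supply the boundedness/compactness needed to invoke Zhu's theorem and the Schatten identity.
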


Before the proof we note, as a consequence of \eqref{eq:peetre}, that $M$ is in $L^p(\R^{2d})$ if and only if $M$ is measurable and the values of $M$ on a lattice is $p$-summable, e.g. $(M(\tilde{\gamma}))_{\tilde{\gamma}\in\Z^{2d}}\in\ell^p(\Z^{2d})$.

\begin{proof}
    Suppose $p\in(0,2]$. Since $M$ satisfies \eqref{eq:peetre} and is in $L^p(\R^{2d})$ we know that $M$ goes towards zero at infinity, so by the Theorem \ref{thm:compact} $\mathfrak{Op}_t^A(\Phi)$ is a compact operator in $L^2(\R^d)$. Now
    \begin{align*}
        \sum_{\tilde{\gamma}\in\Z^{2d}}\Vert \mathfrak{Op}_t^A(\Phi)\mathcal{G}_{\tilde{\gamma}}^A\Vert_{L^2}^p&=\sum_{\tilde{\gamma}\in\Z^{2d}}|\langle\mathfrak{Op}_t^A(\Phi)\mathcal{G}_{\tilde{\gamma}}^A,\mathfrak{Op}_t^A(\Phi)\mathcal{G}_{\tilde{\gamma}}^A\rangle_{L^2}|^{\frac{p}{2}}\\
        &\leq\sum_{\tilde{\gamma},\tilde{\alpha}\in\Z^{2d}}|\langle\mathfrak{Op}_t^A(\Phi)\mathcal{G}_{\tilde{\gamma}}^A,\mathcal{G}_{\tilde{\alpha}}^A\rangle_{L^2}\langle\mathcal{G}_{\tilde{\alpha}}^A,\mathfrak{Op}_t^A(\Phi)\mathcal{G}_{\tilde{\gamma}}^A\rangle_{L^2}|^{\frac{p}{2}}.
    \end{align*}
    Using Theorem \ref{thm:matrix_rep} \ref{matrix_rep_i} with $t=\frac{1}{2}$ and \eqref{eq:peetre} we get
    \begin{align*}
        \sum_{\tilde{\gamma},\tilde{\alpha}\in\Z^{2d}}|\langle\mathfrak{Op}_t^A(\Phi)\mathcal{G}_{\tilde{\gamma}}^A,\mathcal{G}_{\tilde{\alpha}}^A\rangle_{L^2}&\langle\mathcal{G}_{\tilde{\alpha}}^A,\mathfrak{Op}_t^A(\Phi)\mathcal{G}_{\tilde{\gamma}}^A\rangle_{L^2}|^{\frac{p}{2}}\\
        &\leq C\Vert\Phi\Vert_{S_0(M),n}^p\sum_{\tilde{\gamma},\tilde{\alpha}\in\Z^{2d}}\langle\alpha-\gamma\rangle^{-d-1}\langle\alpha'-\gamma'\rangle^{-d-1}M(\tilde{\gamma})^p<\infty.
    \end{align*}
    Hence \cite[Theorem B]{Zhu2015} tells us that $\mathfrak{Op}_t^A(\Phi)$ is in $\mathcal{B}_p(L^2(\R^d))$ and, in combination with the above, gives us the norm estimate.

    Now the strategy for proving the statement when $p>2$ is to reduce it to the case of $p\leq2$. Suppose $p>2$. Then we still have that $M$ goes to zero at infinity, so $\mathfrak{Op}_t^A(\Phi)$ is a compact operator. Next we shall use the following repeatedly: For a compact operator $T$ on  a Hilbert space, $T$ is of $p$-Schatten-class if and only if $T^*T$ is of $p/2$-Schatten-class, and if either condition holds:
    $$\Vert T\Vert_{\mathcal{B}_p}=\Vert T^*T\Vert_{\mathcal{B}_{\frac{p}{2}}}^\frac{1}{2}$$

    The symbol calculus of Corollary \ref{symbol_cal} and Remark \ref{adjoint} on adjoints show that $\mathfrak{Op}_t^A(\Phi)^*\mathfrak{Op}_t^A(\Phi)$ is a magnetic pseudo-differential operator with symbol in $S_0(M^2)$. Find a $k\in\N$ such that $2^k>p$. Then $(\mathfrak{Op}_t^A(\Phi)^*\mathfrak{Op}_t^A(\Phi))^{2^{k-1}}$ has a symbol in $S_0(M^{2^k})$, where $M^{2^k}$ lies in $L^{\frac{p}{2^k}}(\R^{2d})$, so by what has already been proven $(\mathfrak{Op}_t^A(\Phi)^*\mathfrak{Op}_t^A(\Phi))^{2^{k-1}}\in\mathcal{B}_{\frac{p}{2^k}}(L^2(\R^d))$. Consequently, $\mathfrak{Op}_t^A(\Phi)\in\mathcal{B}_p(L^2(\R^d))$ and
    $$\Vert\mathfrak{Op}_t^A(\Phi)\Vert_{\mathcal{B}_p}=\Vert (\mathfrak{Op}_t^A(\Phi)^*\mathfrak{Op}_t^A(\Phi))^{2^{k-1}}\Vert_{\mathcal{B}_{\frac{p}{2^k}}}^\frac{1}{2^k}\leq C\Vert\Phi\Vert_{S_0(M),n}$$
    for some $n\in\N_0$.
\end{proof}

\begin{remark}
    In \cite{CorneanHelfferPurice2024} some trace- and Hilbert-Schmidt class properties was proven for certain products of magnetic pseudo-differential operators and decaying factors. This product is itself a magnetic pseudo-differential operator with symbol in our Hörmander classes and covered by the above result.

    In contrast \cite{Athmouni2018} provides more general results when $p\geq1$ using an explicit estimate in the case $p=1$ and then interpolation.
\end{remark}

\begin{remark}
    The boundedness, compactness, and Schatten-class properties presented in Theorem \ref{thm:bounded}, \ref{thm:compact}, and \ref{thm:schatten}, can be extended to the magnetic Sobolev spaces presented in \cite{Iftimie2007} using e.g. the methods in \cite[Section 5.2]{LeeLein2025}.
\end{remark}


    

\section{Discussion}

We hope that this paper might help the study of magnetic Schrödinger operators with symbols which may have a complex decay structure, e.g. mixed growth/decay conditions in different variables. In this regard, a drawback of our article is that symbols which have both decay and growth when differentiating are not treated, see e.g. \cite{Iftimie2007,Nicola2010,Zworski2012}.

One other possible application of our results relates to the magnetic pseudo-differential super operators considered in \cite{LeeLein2022,LeeLein2025}. We conjecture that working with these more general Hörmander classes ($S_0(M)$) extended to the case of super operators could lead to interesting results for the super Moyal algebra and super semi product from \cite{LeeLein2022}, as well as extend the boundedness result in \cite{LeeLein2025} to compact and $p$-Schatten class super operators on Hilbert-Schmidt operators. Another direction would be to identify certain magnetic pseudo-differential super operators which leave the set of trace-class operators from Theorem \ref{thm:schatten} invariant, which could be useful to the study of Lindblad-type time evolutions involving magnetic pseudo-differential operators.

These and other issues are left for future work.

\backmatter

\bmhead{Acknowledgements}

This work was supported, in part, by the Danish National Research Foundation (DNRF), through the Center CLASSIQUE, grant nr. 187. We want to thank H. Cornean for help and support during the writing of this work.






\bibliography{sn-bibliography}

\end{document}